\newcommand{\ignore}[1]{}
\newif\ifsubmit
\newif\ifgrad
\newcommand{\chang}[1]{}
\newcommand{\chang}[1]{{\authnote{\textcolor{red}{Chang: #1}}}}
\title{Manipulating Machine Learning: Poisoning Attacks and Countermeasures for Regression Learning}
\author{
\IEEEauthorblockN{Matthew Jagielski\IEEEauthorrefmark{1},
                  Alina Oprea\IEEEauthorrefmark{1},
                  Battista Biggio \IEEEauthorrefmark{2}\IEEEauthorrefmark{3},
                  Chang Liu\IEEEauthorrefmark{4},
                  Cristina Nita-Rotaru\IEEEauthorrefmark{1},
                  and Bo Li\IEEEauthorrefmark{4}}

\and\and
\IEEEauthorblockA{
    \IEEEauthorrefmark{1}Northeastern University, Boston, MA}
\and
\IEEEauthorblockA{\IEEEauthorrefmark{2}University of Cagliari, Italy}
\and \IEEEauthorblockA{\IEEEauthorrefmark{3}Pluribus One, Italy}

\and
\IEEEauthorblockA{
    \IEEEauthorrefmark{4}UC Berkeley, Berkeley, CA}
}
\newcommand{\diff}[2]{\frac{\partial #1}{\partial #2}}
\newcommand{\vct}[1]{\ensuremath{\boldsymbol{#1}}} 
\newcommand{\mat}[1]{\ensuremath{\mathbf{#1}}}
\newcommand{\set}[1]{\ensuremath{\mathcal{#1}}}
\newcommand{\T}{\ensuremath{\top}}
\newcommand{\myparagraph}[1]{\smallskip \noindent \textbf{#1.}}
\newcommand{\ie}{{i.e.}\xspace}
\newcommand{\eg}{{e.g.}\xspace}
\newcommand{\etal}{{et al.}\xspace}
\newcommand\trainset{\ensuremath{\set D_{\rm tr}}}
\newcommand\valset{\ensuremath{\set D_{\rm val}}}
\newcommand\trainsetsub{\ensuremath{\set D^\prime_{\rm tr}}}
\newcommand\poisonset{\ensuremath{\set D_p}}
\newcommand\alltrain{\ensuremath{\set D}}
\newcommand\attackloss{\ensuremath{\mathcal{W}}}
\newcommand\defenderloss{\ensuremath{\mathcal{L}}}
\newcommand\thetapar{\ensuremath{\vct \theta}}
\newcommand\thetaopt{\ensuremath{\vct \theta^\star}}
\newcommand\thetaoptpoison{\ensuremath{\vct \theta_p^\star}}
\newcommand\npois{\ensuremath{p}}
\newcommand\wb{\ensuremath{{\vct w}}}
\newcommand\xb{\ensuremath{{\vct x}}}
\newcommand\Xb{\ensuremath{{\mat X}}}
\newcommand\attackpar{\ensuremath{{\vct z}}}
\newcommand\baseline{\ensuremath{\mathsf{BGD}}}
\newcommand\rgdrf{\ensuremath{\mathsf{rGDRF}}}
\newcommand\statp{\ensuremath{\mathsf{StatP}}}
\newcommand\rmml{\ensuremath{\mathsf{StatP}}}
\newcommand\optp{\ensuremath{\mathsf{OptP}}}
\newcommand\trim{\ensuremath{\mathsf{TRIM}}}
\newcommand\RANSAC{\ensuremath{\mathsf{RANSAC}}}
\DeclareMathOperator{\argmin}{arg\,min}
\DeclareMathOperator{\argmax}{arg\,max}
\newcommand\invflip{\ensuremath{\mathsf{InvFlip}}}
\newcommand\bflip{\ensuremath{\mathsf{BFlip}}}
\newcommand\trainobj{\ensuremath{\attackloss_{\rm tr}}}
\newcommand\valobj{\ensuremath{\attackloss_{\rm val}}}
\newcommand{\revision}[1]{{\textcolor{black} {#1}}}
\begin{document}

\maketitle

\begin{abstract}
As machine learning becomes widely used for automated decisions, attackers have strong incentives to manipulate the results and models generated by machine learning algorithms. In this paper, we perform the first systematic study of poisoning  attacks and their countermeasures for linear regression models. In poisoning attacks, attackers deliberately influence the training data to manipulate the results of a predictive model. We propose a theoretically-grounded optimization framework specifically designed for linear regression and demonstrate its effectiveness on a range of datasets and models. We also introduce a fast statistical attack that requires limited knowledge of the training process. Finally, we design a new principled defense method that is highly resilient against all poisoning attacks. We provide formal guarantees about its convergence and an upper bound on the effect of poisoning attacks when the defense is deployed. We evaluate extensively our attacks and defenses on three realistic datasets from health care, loan assessment, and real estate domains.\footnote{Preprint of the work accepted for publication at the 39th IEEE Symposium on Security and Privacy, San Francisco, CA, USA, May 21-23, 2018.} \footnote{Note: a prior version of this paper had a bug in the implementation of the \trim\ algorithm, leading to inaccurately good results. After the bugfix, \trim\ still significantly outperforms prior defenses.}
\end{abstract} 

\section{Introduction}

As more applications with large societal impact rely on machine learning for automated decisions, several concerns have emerged about potential vulnerabilities introduced by machine learning algorithms. Sophisticated attackers have strong incentives to manipulate the results and models generated by machine learning algorithms to achieve their objectives. For instance, attackers can deliberately influence the training dataset to manipulate the results of a predictive model (in \emph{poisoning} attacks~\cite{Perdisci06,newsome2006paragraph,Nelson08,ANTIDOTE,Biggio2012poisoning,Newell14,Xiao15}), cause mis-classification of new data in the testing phase (in \emph{evasion attacks}~\cite{biggio13-ecml,Szegedy14,Goodfellow14,Srndic14,Papernot16,Papernot17,Carlini17}) or infer private information on training data (in \emph{privacy attacks}~\cite{fredrikson2014privacy,Membership,Fredrikson15}). Several experts from academia and industry highlighted the importance of considering these vulnerabilities in designing machine learning systems in a recent hearing held by the Senate Subcommittee on Space, Science, and Competitiveness entitled ``The Dawn of AI''~\cite{DawnAI}. The field of \emph{adversarial machine learning} studies the effect of such  attacks against machine learning models and aims to design robust defense algorithms~\cite{Huang2011adversarial}.
A comprehensive survey can be found in~\cite{wildpatterns}.

We consider the setting of poisoning attacks here, in which attackers inject a small number of corrupted points in the training process. Such poisoning attacks have been practically demonstrated in worm signature generation~\cite{newsome2006paragraph,Perdisci06}, spam filters~\cite{Nelson08}, DoS attack detection~\cite{ANTIDOTE}, PDF malware classification~\cite{Xiao15}, handwritten digit recognition~\cite{Biggio2012poisoning}, and sentiment analysis~\cite{Newell14}. We argue that these attacks become easier to mount today as many machine learning models need to be updated regularly to account for continuously-generated data. Such scenarios require \emph{online training}, in which machine learning models are updated based on new incoming training data. For instance, in cyber-security analytics, new Indicators of Compromise (IoC) rise due to the natural evolution of malicious threats, resulting in updates to machine learning models for threat detection~\cite{Predator}. These IoCs are collected from online platforms like VirusTotal, in which attackers can also submit IoCs of their choice. In personalized medicine, it is envisioned that patient treatment is adjusted in real-time by analyzing information crowdsourced from multiple participants~\cite{MLHealth}.  By controlling a few devices, attackers can submit fake information (e.g., sensor measurements), which is then used for training models applied to a large set of patients. Defending against such poisoning attacks is challenging with current techniques. Methods from robust statistics (e.g, \cite{Huber64,RANSAC}) are resilient against noise but perform poorly on adversarially-poisoned data, and methods for sanitization of training data operate under restrictive adversarial models~\cite{Ciocarlie08}.

One fundamental class of supervised learning is linear regression. Regression is widely used for prediction in many settings (e.g., insurance or loan risk estimation, personalized medicine, market analysis).  In a regression task a numerical \emph{response variable} is predicted using a number of \emph{predictor variables}, by learning a model that minimizes a \emph{loss function}. Regression is powerful as it can also be used for classification tasks by mapping numerical predicted values into class labels. Assessing the real impact of adversarial manipulation of training data in linear regression, as well as determining how to design learning algorithms resilient under strong adversarial models is not yet well understood.

In this paper, we conduct the first systematic study of poisoning attacks and their countermeasures for linear regression models.
We make the following contributions: (1) \revision{we are the first to consider the problem of poisoning linear regression  under different adversarial models;} (2) \revision{starting from an existing baseline poisoning attack for classification, we propose a theoretically-grounded optimization framework specifically tuned for regression models;} (3) \revision{we design a fast statistical attack that requires minimal knowledge on the learning process;}  (4) we propose a principled defense algorithm with significantly increased robustness than known methods against a large class of attacks; (5) we extensively evaluate our attacks and defenses on four regression models (OLS, LASSO, ridge, and elastic net), and on several datasets from different domains, including health care, loan assessment, and real estate. We elaborate our contributions below.

\vspace{0.05cm}
\noindent $\bullet$  \revision{On the attack dimension, we are the first to consider the problem of poisoning attacks against linear regression models. Compared to classification poisoning, in linear regression the response variables are continuous and their values also can be selected by the attacker. First, we adapt an existing poisoning attack for classification~\cite{Xiao15} into a baseline regression attack. Second, we design an optimization framework for regression poisoning in which the initialization strategy, the objective function, and the optimization variables can be selected to maximize the attack's impact on a particular model and dataset.  Third, we introduce a fast statistical attack that is motivated by our theoretical analysis and insights. }We find that optimization-based attacks are in general more effective than statistical-based techniques, at the expense of higher computational overhead and more information required by the adversary on the training process.

\vspace{0.05cm}
\noindent $\bullet$ On the defense axis, we propose a principled approach to constructing a defense algorithm called \trim, which provides high robustness and resilience against a large class of poisoning attacks. The \trim\ method estimates the regression parameters iteratively, while using a trimmed loss function to remove points with large residuals. After few iterations, \trim\ is able to isolate most of the poisoning points  and learn a robust regression model. \trim\ performs significantly better and is much more effective in providing robustness compared to known methods from robust statistics (Huber~\cite{Huber64} and RANSAC~\cite{RANSAC}), typically designed to provide resilience against noise and outliers. In contrast to these methods, \trim\ is resilient to poisoned points with similar distribution as the training set. \trim\ also outperforms other robust regression algorithms designed for adversarial settings (e.g., Chen et al.~\cite{Chen13} \revision{and RONI~\cite{Nelson08}}). We provide theoretical guarantees on the convergence of the algorithm and an upper bound on the model Mean Squared Error (MSE) generated when a fixed percentage of poisoned data is included in the training set.

\vspace{0.05cm}
\noindent $\bullet$ We evaluate our novel attacks and defenses extensively on four linear regression models and three datasets from health care, loan assessment, and real estate domains. First, we demonstrate the significant improvement of our attacks over the baseline attack of Xiao et al. in poisoning all models and datasets. For instance, the MSEs of our attacks are increased \revision{{\bf by a factor of 6.83}} compared to the Xiao et al. attack, and \revision{{\bf a factor of 155.7}} compared to unpoisoned regression models.  In a case study health application, we find that our attacks can cause devastating consequences. The optimization attack causes 75\% of patients’ Warfarin medicine dosages to change by an average of \revision{93.49\%}, while one tenth of these patients have their dosages changed by \revision{358.89\%}. Second, we show that our defense \trim\ is also significantly more robust than existing methods against all the attacks we developed. The median MSE increase over all attacks, models, and datasets is \textbf{only 6.1\%}, and only 20\% of attacks cause more than 27.2\% MSE increase. \revision{\trim\ achieves MSEs much lower than existing methods, improving Huber by a factor of 131.8, RANSAC by a factor of 17.5, and RONI by a factor of 20.28, on one dataset.}

\vspace{0.05cm}
\noindent {\bf Outline.} We start by providing background on regression learning, as well as introducing our system and adversarial model in Section~\ref{sec:model}. We describe the baseline attack adapted from  Xiao et al.~\cite{Xiao15}, and our new poisoning attacks in Section~\ref{sec:attacks}. Subsequently, we introduce our novel defense algorithm \trim\ in Section~\ref{sec:defense}. Section~\ref{sec:eval} includes a detailed experimental analysis of our attacks and defenses, as well as comparison with previous methods. Finally, we present related work in Section~\ref{sec:related} and conclude in Section~\ref{sec:conclusions}.

\section{System and adversarial model}
\label{sec:model}

Linear regression is at the basis of machine learning~\cite{ESL}. It is widely studied and applied in many applications due to its efficiency, simplicity of use, and effectiveness.  Other more advanced learning methods (e.g., logistic regression, SVM, neural networks) can be seen as generalizations or extensions of linear regression. We systematically study the effect of poisoning attacks and their defenses for linear regression. We believe that our understanding of the resilience of this fundamental class of learning model to adversaries will enable future research on other classes of supervised learning methods.

\myparagraph{Problem definition} Our system model is a supervised setting consisting of a \emph{training phase} and a \emph{testing phase} as shown in Figure~\ref{fig:model} on the left (``Ideal world''). The learning process includes a data pre-processing stage that performs data cleaning and normalization, after which the training data can be represented, \revision{without loss of generality}, as $\trainset = \{( \vct x_i, y_i)\}_{i=1}^{n}$, where $\vct x_i \in [0,1]^{d}$ are $d$-dimensional numerical \emph{predictor variables} (or \emph{feature vectors}) and $y_i \in [0,1]$ are numerical \emph{response variables}, for $i \in \{1,\dots, n \}$. After that, the learning algorithm is applied to generate the \emph{regression model} at the end of the training phase. In the testing phase, the model is applied to new data after pre-processing, and a numerical predicted value is generated using the regression model learned in training. Our model thus captures a standard multi-dimensional regression setting \revision{applicable to} different prediction tasks. 

In linear regression, the model output at the end of the training stage is a linear function
$f(\xb, \thetapar)=\wb^\T \xb +b$, which predicts the value of $y$ at $\vct x$.
This function is parametrized by a vector $\thetapar = (\vct w, b) \in \mathbb R^{d+1}$ consisting of
the feature weights $\wb \in \mathbb R^d$ and the bias $b \in \mathbb R$.
\revision{
Note that regression is substantially different from classification, as the $y$ values are numerical, rather than being a set of indices (each denoting a different class from a predetermined set).}
The parameters of $f$ are chosen to minimize a quadratic loss function:
\begin{equation}
\defenderloss(\trainset, \thetapar)= \underbrace{\textstyle\frac{1}{n}\textstyle \sum_{i=1}^{n} \left( f(\xb_i, \thetapar)-y_i \right)^2}_{\rm MSE(\set \trainset, \thetapar)} + \lambda\Omega(\wb) \, ,
\label{eq:learning}
\end{equation}
\noindent where the Mean Squared Error ${\rm MSE}(\trainset, \thetapar)$ measures the error in the predicted values assigned by $f(\cdot, \thetapar)$ to the training samples in $\trainset$ as the sum of squared residuals, $\Omega(\wb)$ is a regularization term \revision{penalizing large weight values}, and $\lambda$ is the so-called regularization parameter.
\revision{
Regularization is used to prevent \emph{overfitting}, \ie, to preserve the ability of the learning algorithm to \emph{generalize} well on unseen (testing) data. For regression problems, this capability, \ie, the expected performance of the trained function $f$ on unseen data, is typically assessed by measuring the MSE on a separate test set.
}
Popular linear regression methods differ mainly in the choice of the regularization term. In particular, we consider four models in this paper:
\begin{enumerate}
\item {\bf Ordinary Least Squares (OLS)}, for which $\Omega(\wb)  = 0$ (\ie, no regularization is used);
\item {\bf Ridge regression}, which uses $\ell_2$-norm regularization \revision{$\Omega(\wb)  = \frac{1}{2}\|\wb\|_2^2$};
\item {\bf LASSO}, which uses $\ell_1$-norm regularization $\Omega(\wb)  = \|\wb\|_1$;
\item {\bf Elastic-net regression}, which uses a combination of $\ell_1$-norm and $\ell_2$-norm regularization \revision{$\Omega(\wb)  = \rho \|\wb\|_1 + (1-\rho) \frac{1}{2}\|\wb\|_2^2$}, where $\rho \in (0,1)$ is a configurable parameter, commonly set to 0.5 (as we do in this work).
\end{enumerate}

When designing a poisoning attack, we consider two  metrics for quantifying the effectiveness of the attack. First, we measure the success rate of the poisoning attack by the difference in testing set MSE of the corrupted model compared to the legitimate model (trained without poisoning). Second, we consider the running time of the attack.

\begin{figure}[t]
\begin{centering}
\includegraphics[width=7cm]{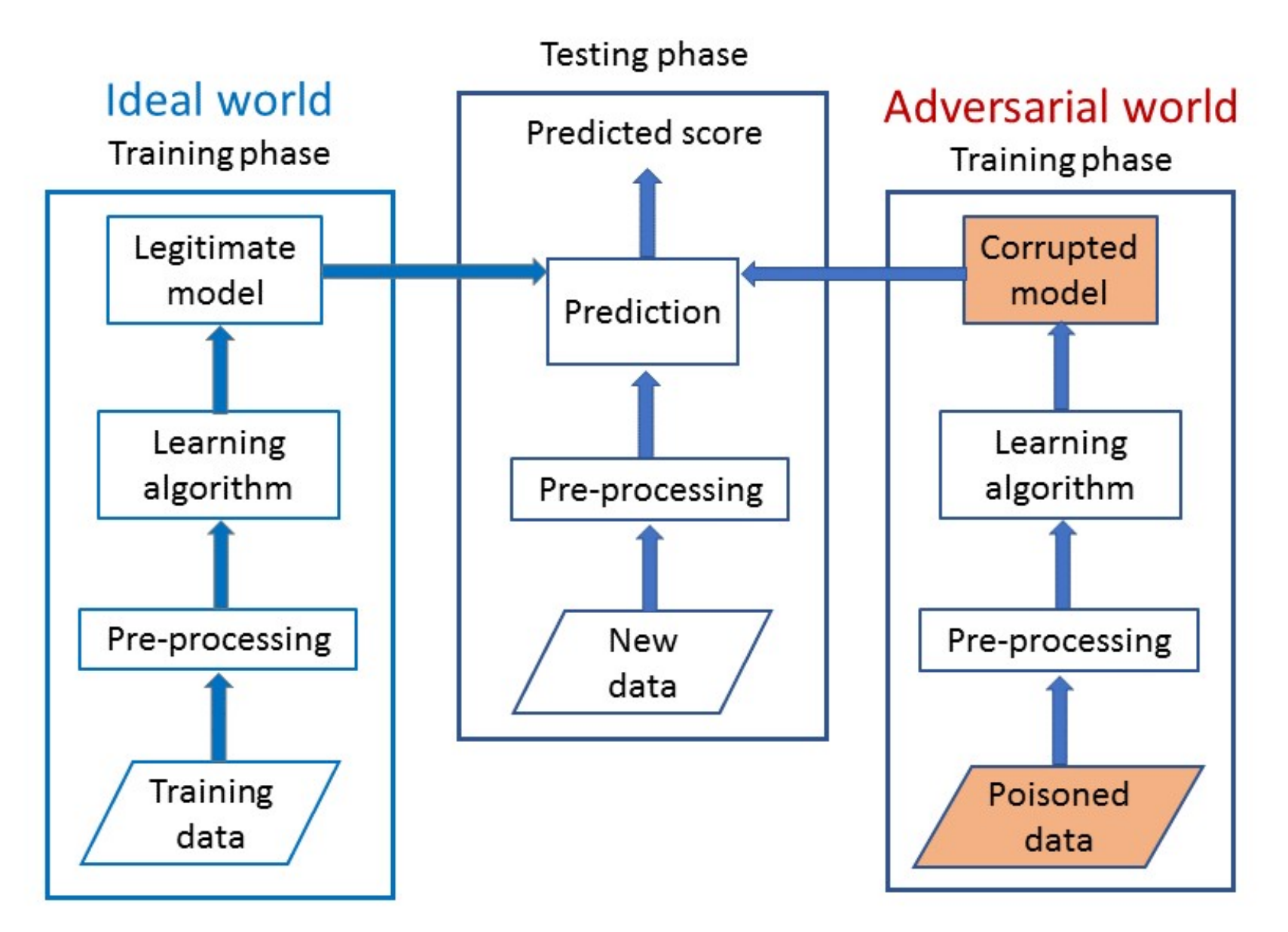}
\caption{System architecture.}
\label{fig:model}
\end{centering}
\end{figure}

\subsection{Adversarial model}

\revision{
We provide here a detailed adversarial model for poisoning attacks against regression algorithms, inspired from previous work in~\cite{Xiao15,biggio17-aisec,biggio14-tkde,Huang2011adversarial}.
The model consists of defining the adversary's goal, knowledge of the attacked system, and capability of manipulating the training data, to eventually define an optimal poisoning attack strategy.}

\myparagraph{Adversary's Goal} \revision{The goal of the attacker is to corrupt the learning model generated in the training phase, so that predictions on new data will be modified in the testing phase.
The attack is considered a \emph{poisoning availability attack}, if its goal is to affect prediction results indiscriminately, \ie, to cause a denial of service.
It is instead referred to as a \emph{poisoning integrity attack}, if the goal is to cause specific mis-predictions at test time, while preserving the predictions on the other test samples.
This is a similar setting to that of backdoor poisoning attacks recently reported in classification settings~\cite{gu17,chen17}.} 

\myparagraph{Adversary's Knowledge} We assume here two distinct attack scenarios, referred to as \emph{white-box} and \emph{black-box} attacks in the following.
In \emph{white-box attacks}, the attacker is assumed to know the training data $\trainset$, the feature values $\vct x$, the learning algorithm $\defenderloss$, and even the trained parameters $\vct \theta$.
These attacks have been widely considered in previous work, although mainly against classification algorithms~\cite{Biggio2012poisoning,Xiao15,mei15-aaai}.
In \emph{black-box attacks}, the attacker has no knowledge of the training set $\trainset$ but can collect a substitute data set $\trainsetsub$. The feature set and learning algorithm are known, while the trained parameters are not. However, the latter can be estimated by optimizing $\defenderloss$ on the substitute data set $\trainsetsub$. This setting is useful to evaluate the \emph{transferability} of poisoning attacks across different \emph{training sets}, as discussed in~\cite{biggio17-aisec,Xiao15}.

\myparagraph{Adversary's Capability} In poisoning attacks, the attacker injects poisoning points into the training set before the regression model is trained (see the right side of Figure~\ref{fig:model} labeled ``Adversarial world''). The attacker's capability is normally limited by upper bounding the number $\npois$ of \emph{poisoning points} that can be injected into the training data, whose feature values and response variables are arbitrarily set by the attacker within a specified range (typically, the range covered by the training data, \ie, $[0,1]$ in our case)~\cite{Xiao15,biggio17-aisec}.
The total number of points in the poisoned training set is thus $N=n+\npois$, with $n$ being the number of pristine training samples. We then define the ratio $\alpha = \npois/n$, and the \emph{poisoning rate} as the actual fraction of the training set controlled by the attacker, \ie, $n/N = \alpha/(1+\alpha)$.
\revision{In previous work, poisoning rates higher than 20\% have been only rarely considered, as the attacker is typically assumed to be able to control only a \emph{small} fraction of the training data. This is motivated by application scenarios such as crowdsourcing and network traffic analysis, in which attackers can only reasonably control a \emph{small} fraction of participants and network packets, respectively.
Moreover, learning a sufficiently-accurate regression function in the presence of higher poisoning rates would be an ill-posed task, if not infeasible at all~\cite{wang14-usenix,Huang2011adversarial,Xiao15,biggio17-aisec,Biggio2012poisoning,mei15-aaai}.}

\myparagraph{Poisoning Attack Strategy} \revision{All the aforementioned poisoning attack scenarios, encompassing availability and integrity violations under white-box or black-box knowledge assumptions, can be formalized as a \emph{bilevel optimization} problem~\cite{biggio17-aisec,mei15-aaai}.
For white-box attacks, this can be written as:}
\begin{eqnarray}
\label{eq:bilevel1}
\argmax_{\poisonset} & & \attackloss(\set D^\prime, \thetaoptpoison) \, , \\
\label{eq:bilevel2}
\mbox{s.t. } & & \thetaoptpoison \in \argmin_{\thetapar} \defenderloss({\trainset} \cup \poisonset, \thetapar) \, .
\end{eqnarray}
\revision{The outer optimization amounts to selecting the poisoning points $\poisonset$ to maximize a loss function $\attackloss$ on an untainted data set $\set D^\prime$ (\eg, a validation set which does not contain any poisoning points), while the inner optimization corresponds to retraining  the regression algorithm on a \emph{poisoned} training set including $\poisonset$.
It should be clear that $\thetaoptpoison$ depends \emph{implicitly} on the set $\poisonset$ of poisoning attack samples through the solution of the inner optimization problem.
In poisoning integrity attacks, the attacker's loss $\attackloss$ can be evaluated only on the points of interest (for which the attacker aims to cause mis-predictions at test time), while in poisoning availability attacks it is computed on an untainted set of data points, indiscriminately.
In the black-box setting, the poisoned regression parameters $\thetaoptpoison$ are estimated using the substitute training data $\trainsetsub$ instead of $\trainset$.}

In the remainder of this work, we only focus on poisoning availability attacks against regression learning, and on defending against them, as those have been mainly investigated in the literature of poisoning attacks. We highlight anyway again that poisoning integrity attacks can be implemented using the same technical derivation presented in this work, and leave a more detailed investigation of their effectiveness to future work.

\section{Attack methodology}
\label{sec:attacks}

In this section, we first discuss previously-proposed gradient-based optimization approaches to solving Problem~\eqref{eq:bilevel1}-\eqref{eq:bilevel2} in classification settings.
In Sect.~\ref{sec:opt-attack}, we discuss how to adapt them to the case of regression learning, and propose novel strategies to further improve their effectiveness. Notably, since these attacks have been originally proposed in the context of classification problems, the class label of the attack sample is arbitrarily initialized and then kept fixed during the optimization procedure (recall that $y$ is a categorical variable in classification). As we will demonstrate in the remainder of this work, a significant improvement we propose here to the current attack derivation is to \emph{simultaneously} optimize the response variable of each poisoning point along with its feature values.
We subsequently highlight some theoretical insights on how each poisoning sample is updated during the gradient-based optimization process.
This will lead us to develop a much faster attack, presented in Sect.~\ref{sec:stat}, which only leverages some statistical properties of the data and requires minimal black-box access to the targeted model.

\subsection{Optimization-based Poisoning Attacks} \label{sec:opt-attack}

Previous work has considered solving Problem~\eqref{eq:bilevel1}-\eqref{eq:bilevel2}
by iteratively optimizing one poisoning sample at a time through gradient ascent~\cite{Biggio2012poisoning,Xiao15,mei15-aaai,biggio17-aisec}.
An exemplary algorithm is given as Algorithm~\ref{alg:poisoning}.
We denote with $\vct x_c$ the feature vector of the attack point being optimized, and with $y_c$ its response variable (categorical for classification problems).
In particular, in each iteration, the algorithm optimizes all points in $\poisonset$,
by updating their feature vectors one at a time.
As reported in~\cite{Xiao15}, the vector $\vct x_c$ can be updated through a line search along the direction of the gradient $\nabla_{\vct x_c} \attackloss$ of the outer objective $\attackloss$ (evaluated at the current poisoned solution) with respect to the poisoning point $\vct x_c$ (cf. line 7 in Algorithm~\ref{alg:poisoning}). Note that this update step should also enforce $\vct x_c$ to lie within the feasible domain (\eg, $\vct x_c \in [0,1]^d$), which can be typically achieved through simple projection operators~\cite{Biggio2012poisoning,Xiao15,biggio17-aisec}.
The algorithm terminates when no sensible change in the outer objective $\attackloss$ is observed.

\begin{algorithm}[tb]
  \caption{Poisoning Attack Algorithm}
  \label{alg:poisoning}
\begin{flushleft}
  \textbf{Input:} $\set D = \trainset$ (white-box) or $\trainsetsub$ (black-box), $\set D^\prime$, $\defenderloss$, $\attackloss$, the initial poisoning attack samples $\poisonset^{(0)} = (\vct x_c, y_c)_{c=1}^{\npois}$, a small positive constant $\varepsilon$.
\end{flushleft}
  \begin{algorithmic}[1]
  \State{$i \leftarrow 0$ (iteration counter)}
  \State{$\thetapar^{(i)} \leftarrow \argmin_{\thetapar} \defenderloss({\set D} \cup \poisonset^{(i)}, \thetapar)$ }
		\Repeat
		\State{$w^{(i)} \leftarrow \attackloss(\set D^\prime, \thetapar^{(i)})$}
		\State{$\thetapar^{(i+1)} \leftarrow \thetapar^{(i)}$}
		\For {c = $1, \ldots, \npois$}
		\State{$\vct x_{c}^{(i+1)} \leftarrow \mbox{line\_search} \left ({\vct x_c}^{(i)} ,  \nabla_{{\vct x_c}} {\attackloss}(\set D^\prime, \thetapar^{(i+1)}) \right)$}
		\State{$\thetapar^{(i+1)} \leftarrow \argmin_{\thetapar} \defenderloss({\set D} \cup \poisonset^{(i+1)}, \thetapar)$ }
		\State{$w^{(i+1)} \leftarrow \attackloss(\set D^\prime, \thetapar^{(i+1)})$}
		\EndFor
		\State{$i \leftarrow i + 1$}
		\Until{$|w^{(i)} - w^{(i-1)}| < \varepsilon$}
		
  \end{algorithmic}
  \begin{flushleft}
  \textbf{Output:} the final poisoning attack samples $\poisonset \leftarrow \poisonset^{(i)}$
  \end{flushleft}

\end{algorithm}

\myparagraph{Gradient Computation} The aforementioned algorithm is essentially a standard gradient-ascent algorithm with line search. The challenging part is understanding how to compute the required gradient $\nabla_{\vct x_c} \attackloss (\set D^\prime, \thetapar)$, as this has to capture the implicit dependency of the parameters  $\thetapar$ of the inner problem on the poisoning point $\vct x_c$.
Indeed, assuming that $\attackloss$ does not depend directly on $\vct x_c$, but only through $\thetapar$, we can compute $\nabla_{\vct x_c} \attackloss(\set D^\prime, \thetapar)$ using the chain rule as:
\begin{equation}
\nabla_{\vct x_c} \attackloss = \nabla_{\vct x_c} \thetapar(\vct x_c)^\T \cdot \nabla_{\thetapar} \attackloss \, ,
\label{eq:grad-full}
\end{equation}
where we have made explicit that $\thetapar$ \emph{depends} on $\vct x_c$.
While the second term is simply the derivative of the outer objective with respect to the regression parameters, the first one captures the dependency of the solution $\thetapar$ of the learning problem on $\vct x_c$.

We focus now on the computation of the term $\nabla_{\vct x_c} \thetapar(\vct x_c)$.
While for bilevel optimization problems in which the inner problem is not convex (\eg, when the learning algorithm is a neural network) this requires efficient numerical approximations~\cite{biggio17-aisec}, when the inner learning problem is convex, the gradient of interest can be computed in closed form.
The underlying trick is to replace the inner learning problem (Eq.~\ref{eq:bilevel2}) with its Karush-Kuhn-Tucker (KKT) equilibrium conditions, \ie, $\nabla_{\thetapar} \defenderloss( \trainsetsub \cup \poisonset, \thetapar ) = \vct 0$, and require such conditions to remain valid while updating $\vct x_c$~\cite{Biggio2012poisoning,Xiao15,mei15-aaai,biggio17-aisec}.
 To this end, we simply impose that their derivative with respect to $\vct x_c$ remains at equilibrium, \ie,
 $\nabla_{\vct x_c} \left (\nabla_{\thetapar} \defenderloss(\trainsetsub \cup \poisonset, \thetapar) \right ) = \vct 0$. Now, it is clear that the function $\defenderloss$ depends explicitly on $\vct x_c$ in its first argument, and implicitly through the regression parameters $\thetapar$.
 Thus, differentiating again with the chain rule, one yields the following linear system:
 \begin{equation}
 \nabla_{\vct x_c} \nabla_{\thetapar} \defenderloss +  \nabla_{\vct x_c} \thetapar^\T \cdot \nabla^2_{\thetapar} \defenderloss = \vct 0 \, .
 \end{equation}
Finally, solving for $\nabla_{\vct x_c} \thetapar$, one yields:
 \begin{equation}
  \nabla_{\vct x_c} \thetapar^\T = \left[ \begin{array}{cc}
\frac{\partial \wb}{\partial\xb_c}^\T &
\frac{\partial b}{\partial\xb_c}^\T
\end{array} \right] = - \nabla_{\vct x_c} \nabla_{\thetapar} \defenderloss \left (\nabla^2_{\thetapar} \defenderloss \right)^{-1} \, .
 \end{equation}
For the specific form of $\defenderloss$ given in Eq.~\eqref{eq:learning}, it is not difficult to see that the aforementioned derivative becomes equal to that reported in~\cite{Xiao15} (except for a factor of $2$ arising from a different definition of the quadratic loss).
\begin{equation}
 \nabla_{\vct x_c} \thetapar^\T= -
\frac{2}{n}
\left[ \begin{array}{cc}
\mathbf{M} &
\wb
\end{array} \right]
\left[ \begin{array}{cc}
\mat{\Sigma} +\lambda \vct{g} & \vct{\mu} \\
\vct{\mu}^\T & 1
\end{array} \right]^{-1} \, ,
\label{eq:grad-theta}
\end{equation}
where $\mathbf{\Sigma} = \frac{1}{n} \sum_{i} \xb_{i} \xb_{i}^{\T}$, $\mathbf{\mu} = \frac{1}{n} \sum_{i} \xb_{i}$, and $\mathbf{M} = \xb_{c} \wb^{\T}  + \left(f(\xb_{c})-y_{c} \right)\mathbb I_d$.
As in~\cite{Xiao15}, the term $\mathbf{g}$ is zero for OLS and LASSO, the identity matrix $\mathbb{I}_d$ for ridge regression, and $(1-\rho) \mathbb{I}_d$ for the elastic net.

\myparagraph{Objective Functions} In previous work, the main objective used for $\attackloss$ has been typically a loss function computed on an untainted validation set $\valset = \{( \vct x'_i, y'_i)\}_{i=1}^{m}$~\cite{Biggio2012poisoning,mei15-aaai,biggio17-aisec}. Notably, only Xiao~\etal~\cite{Xiao15} have used a regularized loss function computed on the training data (excluding the poisoning points) as a proxy to estimate the generalization error on unseen data. The rationale was to avoid the attacker to collect an additional set of points.
In our experiments, we consider both possibilities, always using the ${\rm MSE}$ as the loss function:
\begin{align}
\label{eq:wtr}
\attackloss_{\rm tr}(\trainset, \thetapar) &= \textstyle\frac{1}{n}\textstyle \sum_{i=1}^{n} \left( f(\xb_i, \thetapar)-y_i \right)^2 + \lambda \Omega(\wb) \, ,\\
\label{eq:wval}
\attackloss_{\rm val}(\valset, \thetapar) &= \textstyle\frac{1}{m}\textstyle \sum_{j=1}^{m} \left( f(\xb'_j, \thetapar)-y'_j \right)^2   \, .
\end{align}
The complete gradient $\nabla_{\vct x_c} \attackloss$ (Eq.~\ref{eq:grad-full}) for these two objectives can thus be computed by multiplying Eq.~\eqref{eq:grad-theta} respectively to:

\begin{align}
\label{eq:wtr-grad}
\nabla_{\thetapar} \attackloss_{\rm tr} &=
\left[ \begin{array}{cc}
\nabla_{\wb } \attackloss_{\rm tr} \\
\nabla_{b } \attackloss_{\rm tr}
 \end{array} \right]
 \\
 &= \left[ \begin{array}{cc}
 \textstyle \frac{2}{n} \sum_{i=1}^n (f(\xb_i) - y_i)  \xb_i+ \lambda \diff{\Omega}{\wb}^\T \\
  \textstyle \frac{2}{n} \sum_{i=1}^n (f(\xb_i) - y_i)
 \end{array} \right] \, ,\\
\label{eq:wval-grad}
\nabla_{\thetapar} \attackloss_{\rm val} &=
\left[ \begin{array}{cc}
\nabla_{\wb } \attackloss_{\rm val} \\
\nabla_{b } \attackloss_{\rm val}
 \end{array} \right]
 =
\left[ \begin{array}{cc}
 \textstyle \frac{2}{m} \sum_{j=1}^m (f(\xb_j) - y_j)  \xb_j \\
  \textstyle \frac{2}{m} \sum_{j=1}^m (f(\xb_j) - y_j)
 \end{array} \right] \, .
\end{align}

\myparagraph{Initialization strategies}
We discuss here how to select the initial set $\poisonset$ of poisoning points to be passed as input to the gradient-based optimization algorithm (Algorithm~\ref{alg:poisoning}).
Previous work on poisoning attacks has only dealt with classification problems~\cite{Biggio2012poisoning,Xiao15,mei15-aaai,biggio17-aisec}.
For this reason, the initialization strategy used in all previously-proposed approaches has been to randomly clone a subset of the training data and flip their labels.
Dealing with regression opens up different avenues.
We therefore consider two initialization strategies in this work.
In both cases, we still select a set of points at random from the training set $\trainset$, but then we set the new response value $y_c$ of each poisoning point in one of two ways: ($i$) setting $y_c=1-y$, and $(ii)$ setting $y_c={\rm round}(1-y)$, where ${\rm round}$ rounds to the nearest $0$ or $1$ value (recall that the response variables are in $[0,1]$). We call the first technique \emph{Inverse Flipping} (\invflip) and the second \emph{Boundary Flipping} (\bflip).
Worth remarking, we experimented with many techniques for selecting the feature values before running gradient descent, and found that surprisingly they do not have significant improvement over a simple uniform random choice. We thus report results only for the two aforementioned initialization strategies.

\myparagraph{Baseline Gradient Descent (\baseline) Attack} We are now in a position to define a baseline attack against which we will compare our improved attacks. In particular, as no poisoning attack has ever been considered in regression settings, we define as the baseline poisoning attack an adaptation from the attack by Xiao~\etal~\cite{Xiao15}.
In particular, as in Xiao~\etal~\cite{Xiao15}, we select $\attackloss_{\rm tr}$ as the outer objective.
To simulate label flips in the context of regression, we initialize the response variables of the poisoning points with the \invflip\ strategy.
We nevertheless test all the remaining three combinations of initialization strategies and outer objectives in our experiments.

\myparagraph{Response Variable Optimization}
This work is the first to consider poisoning attacks in regression settings.
Within this context, it is worth remarking that response variables take on continuous values rather than categorical ones.
Based on this observation, we propose here the first poisoning attack that jointly optimizes the feature values $\vct x_c$ of poisoning attacks \emph{and} their associated response variable $y_c$.
To this end, we extend the previous gradient-based attack by considering the optimization of $\vct z_c = (\vct x_c, y_c)$ instead of only considering $\vct x_c$. This means that all previous equations remain valid provided that we substitute $\nabla_{\vct z_c}$ to $\nabla_{\vct x_c}$.
This clearly requires expanding $\nabla_{\vct x_c} \thetapar$ by also considering derivatives with respect to $y_c$:
\begin{equation}
 \nabla_{\vct z_c} \thetapar=
 \left[ \begin{array}{cc}
\diff{\wb}{\xb_c} &  \diff{\wb}{y_c}\\
\diff{b}{\xb_c} & \diff{b}{y_c}
\end{array} \right] \, ,
 \end{equation}
and, accordingly, modify Eq.~\eqref{eq:grad-theta} as
\begin{equation}
 \nabla_{\vct z_c} \thetapar^\T= -
\frac{2}{n}
\left[ \begin{array}{cc}
\mathbf{M} &
\wb \\
- \vct x_c^\T & -1
\end{array} \right]
\left[ \begin{array}{cc}
\mat{\Sigma} +\lambda \vct{g} & \vct{\mu} \\
\vct{\mu}^\T & 1
\end{array} \right]^{-1} \, .
\label{eq:grad-theta-yc}
\end{equation}
 The derivatives given in Eqs.~\eqref{eq:wtr-grad}-\eqref{eq:wval-grad} remain clearly unchanged, and can be pre-multiplied by Eq.~\eqref{eq:grad-theta-yc} to obtain the final gradient $\nabla_{\vct z_c} \attackloss$.
 Algorithm~\ref{alg:poisoning} can still be used to implement this attack, provided that both $\vct x_c$ and $y_c$ are updated along the gradient $\nabla_{\vct z_c} \attackloss$ (cf.  Algorithm~\ref{alg:poisoning}, line 7).

\myparagraph{Theoretical Insights}
We discuss here some theoretical insights on the bilevel optimization of Eqs.~\eqref{eq:bilevel1}-\eqref{eq:bilevel2}, which will help us to derive the basis behind the statistical attack introduced in the next section.
To this end, let us first consider as the outer objective a non-regularized version of $\attackloss_{\rm tr}$, which can be obtained by setting $\lambda=0$ in Eq.~\eqref{eq:wtr}.
As we will see, in this case it is possible to compute simplified closed forms for the required gradients.
Let us further consider another objective denoted with $\attackloss^\prime_{\rm tr}$, which, instead of optimizing the loss, optimizes the difference in predictions from the original, unpoisoned model $\thetapar^\prime$:
\[
\attackloss^\prime_{\rm tr} = \textstyle \frac{1}{n}\sum_{i=1}^{n} ( f(\vct x_i, \thetapar) - f(\vct x_i, \thetapar^\prime))^2.
\]

In Appendix~\ref{app:att-th}, we show that $\attackloss{\rm tr}$ and $\attackloss^\prime_{\rm tr}$ are interchangeable for our bilevel optimization problem.  In particular, differentiating $\attackloss^\prime_{\rm tr}$ with respect to $\vct z_c=(\xb_c,y_c)$ gives:
\begin{align}
\diff{\attackloss^\prime_{\rm tr}}{\xb_c} &= \textstyle\frac{2}{n}(f(\xb_c, \thetapar)-f(\xb_c, \thetapar^\prime))(\wb_0-2\wb)^\T\\
\diff{\attackloss^\prime_{\rm tr}}{y_c} &= \textstyle \frac{2}{n}(f(\xb_c, \thetapar) - f(\xb_c, \thetapar^\prime)).
\end{align}
The update rules defined by these gradients have nice interpretation. We see that $\diff{\attackloss^\prime_{\rm tr}}{y_c}$ will update $y_c$ to be further away from the original line than it was in the previous iteration. This is intuitive, as a higher distance from the line will push the line further in that direction. The update for $\xb_c$ is slightly more difficult to understand, but by separating $(\wb_0-2\wb)$ into $(-\wb) + (\wb_0-\wb)$, we see that the $\xb_c$ value is being updated in two directions summed together. The first is perpendicularly away from the regression line (like the $y_c$ update step, the poison point should be as far as possible from the regression line). The second is parallel to the difference between the original regression line and the poisoned regression line (it should keep pushing in the direction it has been going). This gives us an intuition for how the poisoning points are being updated, and what an optimal poisoning point looks like.

\subsection{Statistical-based Poisoning Attack (\rmml)}
\label{sec:stat}

Motivated by the aforementioned theoretical insights, we design a fast statistical attack that produces poisoned points with similar distribution as the training data. In \rmml, we simply sample from a multivariate normal distribution with the mean and covariance estimated from the training data. Once we have generated these points, \revision{we round the feature values to the corners, exploiting the observation that the most effective poisoning points are near corners. Finally, we select the response variable's value at the boundary (either 0 or 1) to maximize the loss.}

Note that, importantly, the \rmml\ attack requires only black-box access to the model, as it needs to query the model to find the response variable (before performing the boundary rounding). It also needs minimal information to be able to sample points from the training set distribution. In particular, \rmml\ requires an estimate of the mean and co-variance of the training data. However, \rmml\ is agnostic to the exact regression algorithm, its parameters, and training set. Thus, it requires much less information on the training process than the optimization-based attacks.
It is significantly faster than optimization-based attacks, though slightly less effective.

\newtheorem{thm}{Theorem}

\section{Defense Algorithms}
\label{sec:defense}

In this section, we describe existing defense proposals against poisoning attacks, and explain why they may not be effective under adversarial corruption in the training data. Then we present a new approach called \trim, specifically designed to increase robustness against a range of poisoning attacks.

\begin{figure}[t!]
\centering
\vspace{-10pt}
\includegraphics[width=\columnwidth]{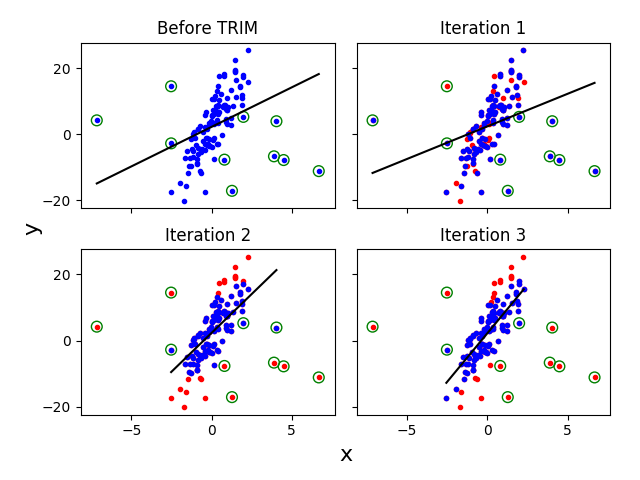}
\vspace{-10pt}
\caption{\revision{Several iterations of the \trim\ algorithm. Initial poisoned data is in blue in top left graph. The top right graph shows in red the initial randomly selected points removed from the optimization objective. In the following two iterations (bottom left and right graphs) the set of high-residual points is refined and the model becomes more robust.}}
\label{fig:trim}
\vspace{-15pt}
\end{figure}

\subsection{Existing defense proposals}

Existing defense proposals can be classified into two categories: noise-resilient regression algorithms and adversarially-resilient defenses. We discuss these approaches below.

\vspace{0.1cm}
\noindent {\bf Noise-resilient regression.} Robust regression has been extensively studied in statistics  as a method to provide resilience against noise and outliers~\cite{Huber64,tyler2008robust,Xu10,huber2011robust}. The main idea behind these approaches is to identify and remove \emph{outliers} from a dataset. For example, Huber~\cite{Huber64} uses an outlier-robust loss function. RANSAC~\cite{RANSAC} iteratively trains a model to fit a subset of samples selected at random, and identifies a training sample as an outlier if the error when fitting the model to the sample is higher than a threshold.

While these methods provide robustness guarantees against noise and outliers, an adversary can still generate poisoning data that affects the trained model. In particular, an attacker can generate poisoning points that are very similar to the true data distribution (these are called \emph{inliers}), but can still mislead the model. Our new attacks discussed in Section~\ref{sec:attacks} generate poisoning data points which are akin to the pristine ones. For example, in \statp\ the poisoned points are chosen from a distribution that is similar to that of the training data (has the same mean and co-variance). It turns out that these existing regression methods are not robust against inlier attack points chosen to maximally mislead the estimated regression model.

\vspace{0.1cm}
\noindent {\bf Adversarially-resilient regression.} Previously proposed adversarially-resilient regression algorithms typically provide guarantees under strong assumptions about data and noise distribution. For instance, Chen et al.~\cite{Chen13,Chen13sparse} assume that the feature set matrix satisfies $X^T X = I$ and data has sub-Gaussian distribution. Feng et al.~\cite{Feng14} assume that the data and noise satisfy the sub-Gaussian assumption. Liu et al.~\cite{Liu17-AISEC} design robust linear regression algorithms robust under the assumption that the feature matrix has low rank and can be projected to a lower dimensional space. All these methods have provable robustness guarantees, but the assumptions on which they rely are not usually satisfied in practice.

\begin{algorithm}[t]
    \begin{algorithmic}[1]

    \State {\bf Input}: Training data $\alltrain = \trainset\ \cup \poisonset$ with $|\alltrain| = N$; number of attack points $\npois = \alpha \cdot n$.
    \State {\bf Output}: $\thetapar$.

    \State $\mathcal{I}^{(0)}\leftarrow \{1,...,N\}$ /* First train with all samples */
    \State $\thetapar^{(0)}\leftarrow \argmin_{\thetapar} \defenderloss(\alltrain^{\mathcal{I}^{(0)}}, \thetapar)$ /* Initial estimation of $\thetapar$*/
    \State $i\leftarrow 0$ /* Iteration count */
    \Repeat
        \State $i\leftarrow i+1$;
        \State $\mathcal{I}^{(i)}\leftarrow$ subset of size $n$ that min. $\defenderloss(\alltrain^{\mathcal{I}^{(i)}},\thetapar^{(i-1)})$
        \State $\thetapar^{(i)} \leftarrow \argmin_{\thetapar} \defenderloss(\alltrain^{\mathcal{I}^{(i)}},\thetapar)$ /* Current estimator */
        \State $R^{(i)}=\defenderloss(\alltrain^{\mathcal{I}^{(i)}},\thetapar^{(i)})$ /* Current loss */
    \Until{$i>1\wedge R^{(i)}=R^{(i-1)}$} /* Convergence condition*/
    \State \Return $\thetapar^{(i)}$ /* Final estimator */.
    \caption{[\trim\ algorithm]}
    \label{alg:trim}
    \end{algorithmic}
\end{algorithm}

\subsection{\trim\ algorithm}

\revision{ In this section, we propose a novel defense algorithm called \trim\ with the goal of training a regression model with poisoned data. At an intuitive level, rather than simply removing outliers from the training set, \trim\ takes a principled approach. \trim\ iteratively estimates the regression parameters, while at the same time training on a subset of points with lowest residuals in each iteration. In essence, \trim\ uses a trimmed loss function computed on a different subset of residuals in each iteration. Our method is inspired by techniques from robust statistics that use trimmed versions of the loss function for robustness (e.g. \cite{rousseeuw2006computing}). Our main contribution is to apply trimmed optimization techniques for regularized linear regression in adversarial settings, and demonstrate their effectiveness compared to other defenses on a range of models and real-world datasets. }

As in Section~\ref{sec:model}, assume that the original training set is \trainset\ of size $n$, the attacker injects $\npois =  \alpha \cdot n $ poisoned samples \poisonset, and the poisoned training set $\alltrain = \trainset \cup \poisonset$ is of size $N =  (1+\alpha)n $. We require that $\alpha<1$ to ensure that the majority of training data is pristine (unpoisoned).

\revision{Our main observation is the following: we can train a linear regression model only using a subset of training points of size $n$. In the ideal case, we would like to identify all $\npois$ poisoning points and train the regression model based on the remaining $n$ legitimate points. However, the true distribution of the legitimate training data is clearly unknown, and it is thus difficult to separate legitimate and attack points precisely.
To alleviate this, our proposed defense tries to identify a set of training points with lowest residuals relative to the regression model (these might include attack points as well, but only those that are ``close'' to the legitimate points and do not contribute much to poisoning the model). In essence, our \trim\ algorithm provides a solution to the following optimization problem:}
\begin{equation}
\min_{\thetapar, \mathcal{I}} \defenderloss(\alltrain^\mathcal{I},\thetapar)\quad\mathrm{s.t.}~I \subset [1,\dots,N] \wedge |\mathcal{I}|=n\label{eq:opt} \, .
\end{equation}
We use the notation $\alltrain^{\mathcal{I}}$ to indicate the data samples $\{(\xb_i, y_i)\in \alltrain\}_{i\in\mathcal{I}}$. Thus, we optimize the parameter $\thetapar$ of the regression model and the subset $I$ of points with smallest residuals at the same time. \revision{It turns out though that solving this optimization problem efficiently is quite challenging. A simple algorithm that enumerates all subsets $I$ of size $n$ of the training set is computationally inefficient. On the other hand, if the true model parameters $\thetapar=(\wb, b)$ were known, then we could simply select points in set $I$ that have lowest residual relative to $\thetapar$. However, what makes this optimization problem difficult to solve is the fact that $\thetapar$ is not known, and we do not make any assumptions on the true data distribution or the attack points. }

To address these issues, our \trim\ algorithm learns parameter $\thetapar$ and distinguishes points with lowest residuals from training set alternately. We employ an iterative algorithm inspired by techniques such as alternating minimization or expectation maximization~\cite{AltMin}. At the beginning of iteration $i$, we have an estimate of parameter $\thetapar^{(i)}$. We use this estimate as a discriminator to identify all inliers, whose residual values are the $n$ smallest ones. We do not consider points with large residuals (as they increase MSE), but use only the inliers to estimate a new parameter $\thetapar^{(i+1)}$. This process terminates when the estimation converges and the loss function reaches a minimum. The detailed algorithm is presented in Algorithm~\ref{alg:trim}.  \revision{A graphical representation of three iterations of our algorithm is given in Figure~\ref{fig:trim}. As observed in the figure, the algorithm iteratively finds the direction of the regression model that fits the true data distribution, and identifies points that are outliers.}

We provide provable guarantees on the convergence of Algorithm \ref{alg:trim} and the estimation accuracy of the regression model it outputs. First, Algorithm~\ref{alg:trim} is guaranteed to converge and thus it terminates in finite number of iterations, as stated in the following theorem.

\vspace{-0.1cm}
\begin{restatable}{thm}{trimconv} Algorithm~\ref{alg:trim} terminates in a finite number of iterations.
\label{thm:1}
\end{restatable}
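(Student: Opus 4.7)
The plan is to show that the sequence of loss values $R^{(i)}$ is strictly decreasing whenever the algorithm has not terminated, combined with the fact that only finitely many distinct states of the index set $\mathcal{I}$ are possible. This is the standard convergence template for block-coordinate / alternating-minimization schemes adapted to a discrete-continuous setting.

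First I would establish that $R^{(i)} \le R^{(i-1)}$ for every $i \ge 1$. This follows from the two minimization steps in each iteration. In Step~8, $\mathcal{I}^{(i)}$ minimizes $\defenderloss(\alltrain^{\mathcal{I}}, \thetapar^{(i-1)})$ over all size-$n$ subsets, so in particular
\[
\defenderloss(\alltrain^{\mathcal{I}^{(i)}}, \thetapar^{(i-1)}) \le \defenderloss(\alltrain^{\mathcal{I}^{(i-1)}}, \thetapar^{(i-1)}) = R^{(i-1)}.
\]
In Step~9, $\thetapar^{(i)}$ minimizes the loss on the fixed subset $\alltrain^{\mathcal{I}^{(i)}}$, so
\[
R^{(i)} = \defenderloss(\alltrain^{\mathcal{I}^{(i)}}, \thetapar^{(i)}) \le \defenderloss(\alltrain^{\mathcal{I}^{(i)}}, \thetapar^{(i-1)}) \le R^{(i-1)}.
\]
Thus $R^{(i)}$ is monotonically non-increasing and trivially bounded below (by $0$ for OLS/LASSO, and by $0$ after accounting for the regularizer in the other cases).

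Next I would exploit the termination test: the algorithm stops as soon as $R^{(i)} = R^{(i-1)}$. Hence if it has not terminated after iteration $i$, the inequality above must be strict: $R^{(i)} < R^{(i-1)}$. The key observation is that the value $R^{(i)}$ is completely determined by the subset $\mathcal{I}^{(i)}$, since $\thetapar^{(i)} = \argmin_{\thetapar} \defenderloss(\alltrain^{\mathcal{I}^{(i)}}, \thetapar)$ and the minimum value of a (convex) function is well-defined even if the argmin is not unique. Define $\varphi(\mathcal{I}) = \min_{\thetapar} \defenderloss(\alltrain^{\mathcal{I}}, \thetapar)$; then $R^{(i)} = \varphi(\mathcal{I}^{(i)})$, so a strict decrease in $R$ forces the visited subsets $\mathcal{I}^{(1)}, \mathcal{I}^{(2)}, \ldots$ to be pairwise distinct.

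Finally, the number of size-$n$ subsets of $\{1,\ldots,N\}$ is $\binom{N}{n}$, which is finite. Combining this with the fact that no subset may be revisited before termination yields the conclusion: the algorithm must terminate after at most $\binom{N}{n}$ iterations. The main subtlety I would be careful about is the possibility that $\thetapar^{(i)}$ or $\mathcal{I}^{(i)}$ is not uniquely defined (ties in Step~8, or a singular Hessian in Step~9); I would handle this by phrasing everything in terms of loss \emph{values} $\varphi(\mathcal{I})$ and an arbitrary but fixed tie-breaking rule, which leaves the monotonicity and strict-decrease arguments intact.
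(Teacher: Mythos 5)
Your proposal is correct and follows essentially the same route as the paper's proof: both establish $R^{(i)}\le R^{(i-1)}$ from the two alternating minimization steps (over $\mathcal{I}$ with $\thetapar$ fixed, then over $\thetapar$ with $\mathcal{I}$ fixed), observe that the loss value is determined by the chosen subset so only finitely many values are possible, and conclude that the strict decrease forced by non-termination can happen at most $\binom{N}{n}$ times. Your explicit handling of non-unique minimizers via the value function $\varphi(\mathcal{I})$ is a small refinement the paper leaves implicit, but it does not change the argument.
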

\vspace{-0.1cm}

\revision{We do not explicitly provide a bound on the number of iterations needed for convergence, but it is always upper bounded by ${N}\choose{n}$. However, our empirical evaluation demonstrates that Algorithm~\ref{alg:trim} converges within few dozens of iterations at most.}

We are next interested in analyzing the quality of the estimated model computed from Algorithm~\ref{alg:trim} (\emph{adversarial world}) and how it relates to the pristine data (\emph{ideal world}). However, relating these two models directly is challenging due to the iterative minimization used by Algorithm~\ref{alg:trim}. We overcome this by observing that Algorithm~\ref{alg:trim} finds a \emph{local minimum} to the optimization problem from (\ref{eq:opt}). \revision{There is no efficient algorithm for solving (\ref{eq:opt}) that guarantees the solution to be the global minimum of the optimization problem.}

It turns out that we can provide a guarantee about the \emph{global minimum} $\hat{\thetapar}$ of (\ref{eq:opt}) on poisoned data (under worst-case adversaries) in relation to the parameter \thetaopt\ learned by the original model on pristine data. In particular, Theorem~\ref{thm:2} shows that $\hat{\thetapar}$ ``fits well" to at least $(1-\alpha)\cdot n$ pristine data samples. Notably, it does not require any assumptions on how poisoned data is generated, thus it provides guarantees under worst-case adversaries.
\vspace{-0.1cm}

\begin{restatable}{thm}{trimloss} Let \trainset\ denote the original training data, $\hat{\thetapar}$ the global optimum for (\ref{eq:opt}), and $\thetaopt=\argmin_{\thetapar} \defenderloss(\trainset, \thetapar)$ the estimator in the ideal world on pristine data. Assuming $\alpha<1$, there exist a subset $\mathcal{D}' \subset \trainset$ of $(1-\alpha)\cdot n$ pristine data samples such that
\begin{equation}
\mathrm{MSE}(\mathcal{D}',\hat{\thetapar}) \leq \bigg( 1+\frac{\alpha}{1-\alpha}\bigg)  \defenderloss(\trainset,\thetaopt) \, .
\label{eq:guarantee}
\end{equation}
\label{thm:2}
\end{restatable}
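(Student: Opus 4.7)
The plan is to exploit the global optimality of $\hat{\thetapar}$ together with a simple counting argument on how many pristine points must lie inside the trimmed subset selected by the optimization. Let $\hat{\mathcal{I}}$ denote the size-$n$ index set achieving the global minimum of (\ref{eq:opt}), and write $\hat{\mathcal{D}}=\alltrain^{\hat{\mathcal{I}}}$. The first observation is that the pair $(\trainset,\thetaopt)$ is \emph{feasible} for (\ref{eq:opt}): $\trainset\subset\alltrain$ and $|\trainset|=n$. Therefore, by global optimality,
\begin{equation*}
\defenderloss(\hat{\mathcal{D}},\hat{\thetapar})\;\leq\;\defenderloss(\trainset,\thetaopt).
\end{equation*}
This single inequality will carry most of the weight of the argument; no assumption on the poisoning set is needed, which is precisely why the bound holds under worst-case adversaries.

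Next I would control how much of $\hat{\mathcal{D}}$ can come from the poisoned set $\poisonset$. Since $|\poisonset|=\alpha n$ and $|\hat{\mathcal{D}}|=n$, the pristine portion $\hat{\mathcal{D}}_{\rm tr}:=\hat{\mathcal{D}}\cap\trainset$ has size at least $(1-\alpha)n$. I would then define $\mathcal{D}'$ to be any subset of $\hat{\mathcal{D}}_{\rm tr}$ of size exactly $(1-\alpha)n$; by construction $\mathcal{D}'\subset\trainset$, as required.

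The remainder is a chain of elementary inequalities. Since squared residuals are nonnegative and $\mathcal{D}'\subset\hat{\mathcal{D}}$,
\begin{equation*}
\sum_{i\in\mathcal{D}'}(f(\xb_i,\hat{\thetapar})-y_i)^2\;\leq\;\sum_{i\in\hat{\mathcal{D}}}(f(\xb_i,\hat{\thetapar})-y_i)^2.
\end{equation*}
Dividing by $(1-\alpha)n$ on the left gives ${\rm MSE}(\mathcal{D}',\hat{\thetapar})$, while dividing the right-hand sum by $n$ yields the MSE term of $\defenderloss(\hat{\mathcal{D}},\hat{\thetapar})$, which is itself bounded above by the full loss $\defenderloss(\hat{\mathcal{D}},\hat{\thetapar})$ (because the regularizer $\lambda\Omega(\wb)$ is nonnegative for all four regression models under consideration). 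The mismatch between the two normalizations produces exactly the factor $1/(1-\alpha)=1+\alpha/(1-\alpha)$. Combining with the optimality inequality of the first paragraph yields (\ref{eq:guarantee}).

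I do not anticipate a serious obstacle: the theorem is essentially a structural observation about the global minimizer of (\ref{eq:opt}). The only place where a little care is needed is matching normalizations—MSE averages over $(1-\alpha)n$ points while $\defenderloss(\hat{\mathcal{D}},\cdot)$ averages over $n$ points—and confirming that the regularization term can be safely dropped via nonnegativity of $\Omega$, which is immediate for OLS, ridge, LASSO, and elastic net. Note that the argument crucially uses global optimality, not the local optimum produced by Algorithm~\ref{alg:trim}; as the paper already acknowledges, a guarantee for the iterative solution itself would require a separate analysis.
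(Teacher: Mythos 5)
Your proposal is correct and follows essentially the same route as the paper's own proof: feasibility of $(\trainset,\thetaopt)$ for (\ref{eq:opt}), the counting argument guaranteeing at least $(1-\alpha)n$ pristine indices in the optimal subset, the $1/(1-\alpha)$ normalization mismatch, and nonnegativity of the regularizer. The only (immaterial) difference is that you drop the regularizer immediately via $\mathrm{MSE}\leq\defenderloss$, whereas the paper carries the $\lambda\Omega(\hat{\thetapar})$ term through the chain and discards it at the end.
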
%
\vspace{-0.2cm}

\revision{Note that the above theorem is stated without any assumptions on the training data distribution. This is one of the main difference from prior work~\cite{Chen13, Feng14}, which assume the knowledge of the mean and covariance of the legitimate data. In practice, such information on training data is typically unavailable. Moreover, an adaptive attacker can also inject poisoning samples to modify the mean and covariance of training data. Thus, our results are stronger than prior work in relying on fewer assumptions.}

We now give an intuitive explanation about the above theorem, especially inequality~(\ref{eq:guarantee}). Since $\trainset$ is assumed to be the pristine dataset, and $\mathcal{D}'$ is a subset of $\trainset$ of size $(1-\alpha)n$, we know all data in $\mathcal{D}'$ is also pristine (not corrupted by the adversary). Therefore, the stationary assumption on pristine data distribution, which underpins all machine learning algorithms, guarantees that $\mathrm{MSE}(\trainset, \thetapar)$ is close to $\mathrm{MSE}( \mathcal{D}',\thetapar)$ regardless of the choices of $\thetapar$ and $\mathcal{D}'$, as long as $\alpha$ is small enough.

Next, we explain the left-hand side of inequality~(\ref{eq:guarantee}). This is the MSE of a subset of pristine samples $\mathcal{D}'$ using $\hat{\thetapar}$ computed by the \trim\ algorithm in the adversarial world. Based on the discussion above, the left-hand side is close to the MSE of the pristine data $\trainset$ using the adversarially learned estimator $\hat{\thetapar}$. Thus, inequality (\ref{eq:guarantee}) essentially provides an upper bound on the worst-case MSE using the estimator $\hat{\thetapar}$ output by Algorithm~\ref{alg:trim} from the poisoned data.

To understand what upper bound Theorem~\ref{thm:2} guarantees, we need to understand the right-hand side of inequality (\ref{eq:guarantee}). We use OLS regression (without regularization) as an example to explain the intuition of the right-hand side. In OLS we have $\defenderloss(\trainset, \thetaopt)=\mathrm{MSE}(\trainset,\thetaopt)$, which is the MSE using the ``best-case" estimator computed in the ideal world. Therefore, the right-hand side of inequality (\ref{eq:guarantee}) is proportional to the ideal world MSE, with a factor of $(1+\frac{\alpha}{1-\alpha})$. When $\alpha\leq 20\%$, we notice that this factor is at most $1.25\times$.

\revision{Therefore, informally, Theorem~\ref{thm:2} essentially guarantees that, the ratio of the worst-case MSE by solving (\ref{eq:guarantee}) computed in the adversarial world over best-case MSE computed in ideal world for a linear model is at most $1.25$. Note that since Algorithm~\ref{alg:trim} may not always find the global minimum of (\ref{eq:opt}), we empirically examine this ratio of the worst-case to best-case MSEs. Our empirical evaluation shows that in most of our experiments, this ratio for \trim\ is less than for existing defenses. At $\alpha=12\%$, it is always below $1.46 \times$, and reaches $2.39\times$ for one dataset at $\alpha=20\%$. This ratio depends significantly on the initial indices $\mathcal{I}^{(0)}$. A reasonable initialization strategy to find a good local optimum is to train on all $N$ points initially.}

\revision{For other models whose loss function includes the regularizer term (Lasso, ridge, and elastic net), the right-hand side of (\ref{eq:guarantee}) includes the same term as well. This may allow the blowup of the worst-case MSE in the adversarial world with respect to the best-case MSE to be larger; however, we are not aware of any technique to trigger this worst-case scenario.}

The proofs of Theorem~\ref{thm:1} and~\ref{thm:2} can be found in Appendix~\ref{app:analysis}.

\section{Experimental evaluation}
\label{sec:eval}

We implemented our attack and defense algorithms in Python, using the numpy and sklearn packages. Our code is available at \url{https://github.com/jagielski/manip-ml}. We ran our experiments on four 32 core Intel(R) Xeon(R) CPU E5-2440 v2 @ 1.90GHz machines. We parallelize our  optimization-based attack implementations to take advantage of the multi-core capabilities. We use the standard cross-validation method to split the datasets into 1/3 for training, 1/3 for testing, and 1/3 for validation, and report results as averages over 5 runs. We use two main metrics for evaluating our algorithms: MSE for the effectiveness of the attacks and defenses, and running time for their cost.

We describe the datasets we used for our experiments in Section~\ref{sec:dataset}. We then systematically analyze the performance of the new attacks and compare them against the baseline attack algorithm in Section~\ref{sec:exp-new-attacks}. Finally, we present the results of our new \trim\ algorithm and compare it with previous methods from robust statistics in Section~\ref{sec:exp-defense}.

\subsection{Datasets}
\label{sec:dataset}

We used three public regression datasets in our experimental evaluation. We present some details and statistics about each of them below.

\vspace{0.1cm}
\noindent {\bf Health care dataset.} This dataset includes 5700 patients, where the goal is to predict the dosage of anticoagulant drug Warfarin using demographic information, indication for Warfarin use, individual VKORC1 and CYP2C9 genotypic data, and use of other medications affected by related VKORC1 and CYP2C9 polymorphisms \cite{pharmagkb}. As is standard practice for studies using this dataset (see \cite{fredrikson2014privacy}), we only select patients with INR values between 2 and 3. The INR is a ratio that represents the amount of time it takes for blood to clot, with a therapeutic range of 2-3 for most patients taking Warfarin. The dataset includes 67 features, resulting in 167 features after one-hot encoding categorical features and normalizing numerical features as above.

\vspace{0.1cm}
\noindent {\bf Loan dataset.} This dataset contains information regarding loans made on the Lending Club peer-to-peer lending platform \cite{lendingclub}. The predictor variables describe the loan attributes, including information such as total loan size, interest rate, and amount of principal paid off, as well as the borrower's information, such as number of lines of credit, and state of residence. The response variable is the interest rate of a loan. Categorical features, such as the purpose of the loan, are one-hot encoded, and numerical features are normalized into [0,1]. The dataset contains 887,383 loans, with 75 features before pre-processing, and 89 after. Due to its large scale, we sampled a set of 5000 records for our poisoning attacks.

\vspace{0.1cm}
\noindent {\bf House pricing dataset.}  This dataset is used to predict house sale prices as a function of predictor variables such as square footage, number of rooms, and location~\cite{houseprices}. In total, it includes 1460 houses and 81 features. We preprocess by one-hot encoding all categorical features and normalize numerical features, resulting in 275 total features.

\subsection{New poisoning attacks}
\label{sec:exp-new-attacks}

\begin{figure*}[tbhp]
\centering
\begin{subfigure}[t]{0.66\columnwidth}
\includegraphics[width=\columnwidth]{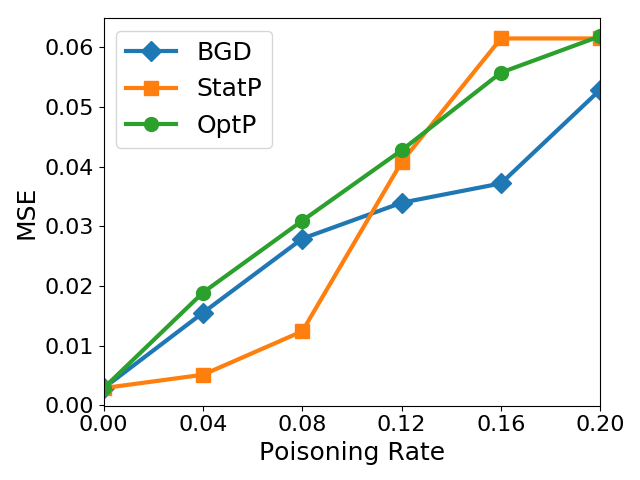}
\caption{Health Care Dataset}
\label{fig:health-ridge-newattacks}
\end{subfigure}
\begin{subfigure}[t]{0.66\columnwidth}
\includegraphics[width=\columnwidth]{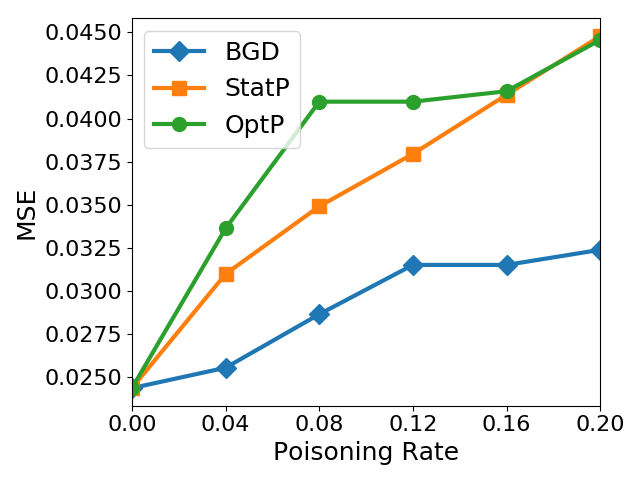}
\caption{Loan Dataset}
\label{fig:loan-ridge-newattacks}
\end{subfigure}
\begin{subfigure}[t]{0.66\columnwidth}
\includegraphics[width=\columnwidth]{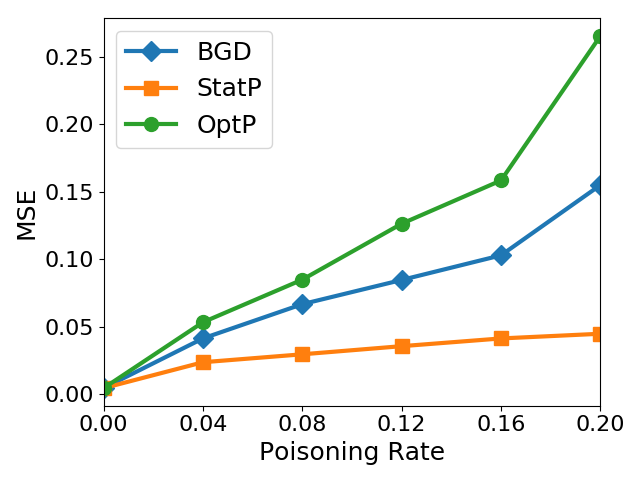}
\caption{House Price Dataset}
\label{fig:house-ridge-newattacks}
\end{subfigure}
\caption{\revision{MSE of attacks on ridge regression on the three datasets. Our new optimization (\optp) and statistical (\statp) attacks  are more effective than the baseline. \optp\ is best optimization attack according to Table~\ref{tab:optpoison}.}}
\label{fig:newattacks-ridge}
\vspace{-0.2cm}
\end{figure*}

\begin{figure*}[htbp]
\centering
\begin{subfigure}[t]{0.66\columnwidth}
\includegraphics[width=\columnwidth]{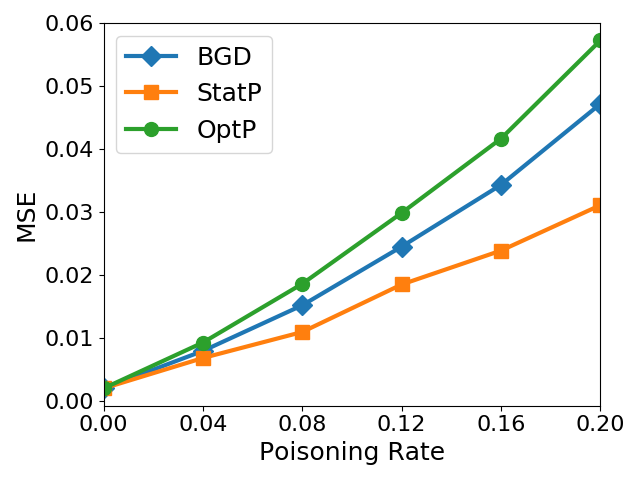}
\caption{Health Care Dataset}
\label{fig:health-lasso-newattacks}
\end{subfigure}
\begin{subfigure}[t]{0.66\columnwidth}
\includegraphics[width=\columnwidth]{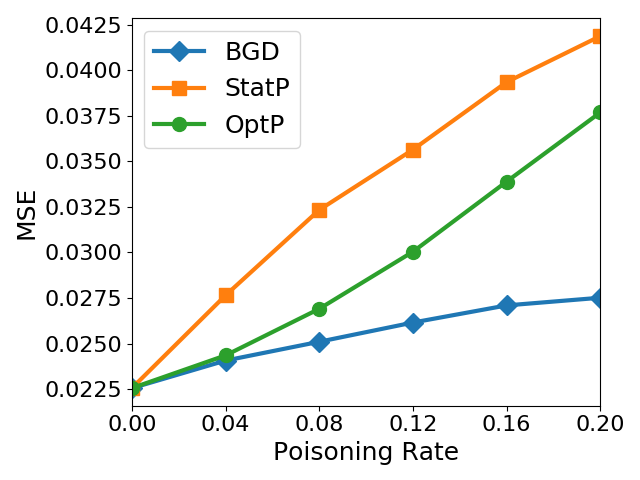}
\caption{Loan Dataset}
\label{fig:loan-lasso-newattacks}
\end{subfigure}
\begin{subfigure}[t]{0.66\columnwidth}
\includegraphics[width=\columnwidth]{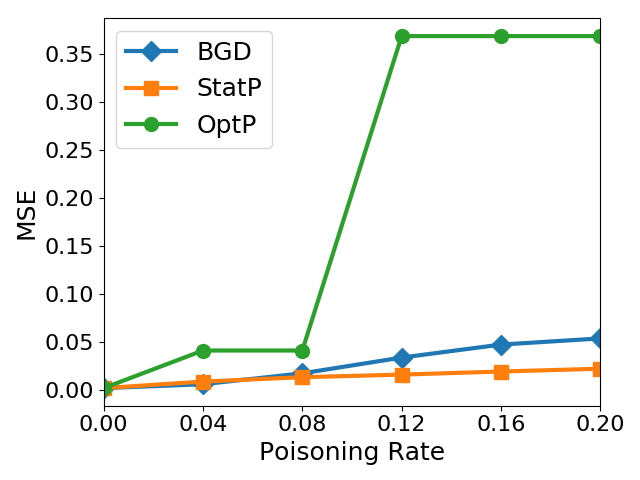}
\caption{House Price Dataset}
\label{fig:house-lasso-newattacks}
\end{subfigure}
\caption{\revision{MSE of attacks on LASSO on the three datasets. As for ridge, we find that \statp\ and \optp\ are able to poison the dataset very effectively, outperforming the baseline (\baseline). \optp\ is best optimization attack according to Table~\ref{tab:optpoison}.}}
\label{fig:newattacks-lasso}
\vspace{-0.2cm}
\end{figure*}

In this section, we perform experiments on the three regression datasets (health care, loan, and house pricing) to evaluate the newly proposed attacks, and compare them against the baseline \baseline\ \cite{Xiao15} for four regression models. For each dataset we select a subset of 1400 records (this is the size of the house dataset, and we wanted to use the same number of records for all datasets).  We use MSE as the metric for assessing the effectiveness of an attack, and also measure the attacks' running times. We vary the poisoning rate between 4\% and 20\% at intervals of 4\% with the goal of inferring the trend in attack success. More details about hyperparameter setting are presented in Appendix~\ref{sec:optattacks}.

Figures~\ref{fig:newattacks-ridge} and \ref{fig:newattacks-lasso} show the MSE of each attack for ridge and LASSO regression. We picked these two models as they are the most popular linear regression models. We plot the baseline attack \baseline, statistical attack \statp, as well as our best performing optimization attack (called \optp). Details on \optp\ are given in Table~\ref{tab:optpoison}. \revision{Additional results for the Contagio PDF classification dataset are given in Appendix~\ref{sec:optattacks}.}

Below we pose several research questions to elucidate the benefits, and in some cases limitations, of these attacks.

\subsubsection{Question 1: Which optimization strategies are most effective for poisoning regression?}

\revision{
Our results confirm that the optimization framework we design is effective at poisoning different models and datasets. Our new optimization attack \optp\ improves upon the baseline \baseline\ attack by a factor of 6.83 in the best case. The \optp\ attack could achieve MSEs by a factor of 155.7 higher than the original models.
}

\begin{table}
\scriptsize
\begin{center}
\begin{tabular}{|c|c|c|c|c|}
\hline

Model & Dataset & Init & Argument & Objective \\
\hline
Ridge & Health & \bflip\ & $(x,y)$ & \trainobj \\
& Loan & \bflip\ & $x$ & \valobj \\
& House & \bflip & $(x,y)$ & \trainobj \\
\hline
LASSO & Health & \bflip\ & $(x,y)$ & \trainobj \\
& Loan & \bflip\ & $(x,y)$ & \valobj \\
& House & \invflip & $(x,y)$ & \valobj \\
\hline
\end{tabular}
\end{center}
\caption{\revision{Best performing optimization attack \optp\ for Ridge and LASSO regression.} }
\label{tab:optpoison}
\end{table}

\revision{
As discussed in Section~\ref{sec:attacks}, our optimization framework has several instantiations, depending on: (1) The initialization strategy (\invflip\ or \bflip); (2) The optimization variable ($x$ or $(x,y)$); and (3) The objective of the optimization (\trainobj\ or \valobj). For instance, \baseline\ is given by (\invflip, $x$, \trainobj). We show that each of these dimensions has an important effect in generating successful attacks. Table~\ref{tab:optpoison} shows the best optimization attack for each model and dataset, while Tables~\ref{tab:lassoloan} and \ref{tab:lassohouse} provide examples of different optimization attacks for LASSO on the loan and house datasets, respectively.}

\revision{
We highlight several interesting observations. First, boundary flip \bflip\ is the preferred initialization method, with only one case (LASSO regression on house dataset) in which \invflip\ performs better in combination with optimizing $(x,y)$ under objective \valobj. For instance, in LASSO on house dataset, \bflip\ alone can achieve a factor of 3.18 higher MSE than \baseline\ using \invflip. In some cases the optimization by $y$ can achieve higher MSEs even starting with non-optimal $y$ values as the gradient ascent procedure is very effective (see for example the attack $(\invflip, (x,y), \valobj)$ in Table~\ref{tab:lassohouse}). However, the combination of optimization by $x$ with \invflip\ initialization (as used by \baseline) is outperformed in all cases by either \bflip\ or $(x,y)$ optimization.}

\revision{Second, using both $(x,y)$ as optimization arguments is most effective compared to simply optimizing by $x$ as in \baseline. Due to the continuous response variables in regression, optimizing by $y$ plays a large role in making the attacks more effective. For instance, optimizing by $(x,y)$ with \bflip\ initialization and \valobj\ achieves a factor of 6.83 improvement in MSE compared to \baseline\ on house dataset with LASSO regression.}

\revision{
Third, the choice of the optimization objective is equally important for each dataset and model. \valobj\ can improve over \trainobj\ by a factor of 7.09  (on house for LASSO), by 17.5\% (on loan for LASSO), and by 30.4\% (on loan for ridge) when the initialization points and optimization arguments are the same.}

\revision{
Thus, all three dimensions in our optimization framework are influential in improving the success of the attack. The optimal choices are dependent on the data distribution, such as feature types, sparsity of the data, ratio of records over data dimension, and data linearity. In particular, we noticed that for non-linear datasets (such as loan), the original MSE is already high before the attack  and  all the attacks that we tested perform worse than in cases when the legitimate data fits a linear model (\ie, it is close to the regression hyperplane).
The reason may be that, in the latter case, poisoning samples may be shifted farther away from the legitimate data (\ie, from the regression hyperplane), and thus have a greater impact than in the former case, when the legitimate data is already more evenly and non-linearly distributed in feature space.
Nevertheless, our attacks are able to successfully poison a range of models and datasets.
}

\begin{table}
\scriptsize
\begin{center}
\vspace{-0.2cm}
\begin{tabular}{|c|c|c|c|c|c|}
\hline

Init & Argument & Objective & \multicolumn{3}{|c|}{Poisoning rates} \\
\cline{4-6}
& & & 12\% & 16\% & 20\% \\
\hline
\invflip & $x$ & \trainobj\ & 0.026 &	0.027 &	0.027 \\
\bflip & $x$  & \trainobj\  &	0.028 &	0.032 &	0.033 \\
\invflip\ & $(x,y)$ &	\trainobj\ &	0.026 & 0.027 &	0.029 \\
\bflip\ & $(x,y)$	& \trainobj\ & 0.029 &	0.0316 & 0.032 \\
\bflip\ & $(x,y)$	& \valobj\	&	0.030 &	0.0338	& 0.0376 \\
\hline
\end{tabular}
\end{center}
\caption{\revision{MSEs of optimization attacks for LASSO on loan data. \baseline\ is the first row.}}
\vspace{-0.2cm}
\label{tab:lassoloan}
\end{table}

\begin{table}
\scriptsize
\begin{center}
\begin{tabular}{|c|c|c|c|c|c|}
\hline

Init & Argument & Objective & \multicolumn{3}{|c|}{Poisoning rates} \\
\cline{4-6}
& & & 12\% & 16\% & 20\% \\
\hline

\invflip & $x$ & \trainobj\ & 	0.034 &	0.047 &	0.054 \\
\bflip\ & $x$ & \trainobj\		& 0.08 	& 0.145	& 0.172 \\
\invflip\ & $(x,y)$	& \trainobj\	&	0.04 &	0.047 &	0.052 \\
\invflip\ & $(x,y)$	 & \valobj\ & 0.369 &	0.369 &	0.369 \\
\bflip\ & $(x,y)$	 & \trainobj\ &	0.08 &	0.145 &	0.172 \\
\hline
\end{tabular}
\end{center}
\caption{\revision{MSEs of optimization attacks for LASSO on house data. \baseline\ is the first row.}}
\vspace{-0.5cm}
\label{tab:lassohouse}
\end{table}

\subsubsection{Question 2: How do optimization and statistical attacks compare in effectiveness and performance?}

\revision{
In general, optimization-based attacks (\baseline\ and \optp) outperform the statistical-based attack \statp\ in effectiveness. This is not surprising to us, as \statp\ uses much less information about the training process to determine the attack points. Interestingly, we have one case (LASSO regression on loan dataset) in which \statp\ outperforms the best optimization attack \optp\ by 11\%. There are also two instances on ridge regression (health and loan datasets) in which \statp\ and \optp\ perform similarly. These cases show that \statp\ is a reasonable attack when the attacker has limited knowledge about the learning system.}

The running time of optimization attacks is proportional to the number of iterations required for convergence. \revision{On the highest-dimensional dataset, house prices, we observe \optp\ taking about 337 seconds to complete for ridge and 408 seconds for LASSO. On the loan dataset, \optp\ finishes LASSO poisoning in 160 seconds on average.  }As expected, the statistical attack is extremely fast, with running times on the order of a tenth of a second on the house dataset and a hundredth of a second on the loan dataset to generate the same number of points as \optp. Therefore, our attacks exhibit clear tradeoffs between effectiveness and running times, with optimization attacks being more effective than statistical attacks, at the expense of higher computational overhead.

\subsubsection{Question 3: What is the potential damage of poisoning in real applications?}

We are interested in understanding the effect of poisoning attacks in real applications, and perform a case study on the health-care dataset. Specifically, we translate the MSE results obtained with our attacks into application specific parameters. In the health care application, the goal is to predict  medicine dosage for the anticoagulant drug Warfarin. In Table~\ref{tab:dosage}, we show first statistics on the medicine dosage predicted by the original regression models (without poisoning), and then the absolute difference in the amount of dosage prescribed after the \optp\ poisoning attack. We find that all linear regression models are vulnerable to poisoning, with 75\% of patients having their dosage changed by \revision{93.49\%}, and half of patients having their dosage changed by \revision{139.31\%} on LASSO. For 10\% of patients, the increase in MSE is devastating to a maximum of \revision{359\%} achieved for LASSO regression. These results are for 20\% poisoning rate, but it turns out that the attacks are also effective at smaller poisoning rates. For instance, at 8\% poisoning rate, the change in dosage is \revision{75.06\%} for half of patients.

Thus, the results demonstrate the effectiveness of our new poisoning attacks that induce significant changes to the dosage of most patients with a small percentage of poisoned points added by the attacker.

\subsubsection{Question 4: What are the transferability properties of our attacks?}
Our transferability analysis for poisoning attacks is based on the black-box scenario discussed in Sect.~\ref{sec:model}, in which the attacker uses a substitute training set $\trainsetsub$ to craft the poisoning samples, and then tests them against the targeted model (trained on $\trainset$).
Our results, averaged on 5 runs, are detailed in Table~\ref{tab:surrogate}, which presents the ratio between transferred and original attacks. Note that the effectiveness of transferred attacks is very similar to that of the original attacks, with some outliers on the house dataset. For instance, the statistical attack \statp\ achieves transferred MSEs within 11.4\% of the original ones. \revision{The transferred \optp\ attacks have lower MSEs by 3\% than the original attack on LASSO. At the same time, transferred attacks could also improve the effectiveness of the original attacks: by 30\% for ridge, and 78\% for LASSO.}
We conclude that, interestingly, our most effective poisoning attacks (\optp\ and \statp) tend to have good transferability properties. There are some exceptions (ridge on house dataset), which deserve further investigation in future work. In most cases the MSEs obtained when using a different training set for both attacks is comparable to MSEs obtained when the attack is mounted on the actual training set.

\vspace{0.1cm}
\noindent{\bf Summary of poisoning attack results.}
\begin{itemize}

\item  \revision{We introduce a new optimization framework for poisoning regression models, which manages to improve upon \baseline\ by a factor of 6.83. The best attack \optp\ selects the initialization strategy, optimization argument, and objective to achieve maximum MSEs. }

\item We find that our statistical-based attack (\statp) works reasonably well in poisoning all datasets and models, is efficient in running time, and needs minimal information on the model. Our optimization-based attack \optp\ takes longer to run, needs more information on the model, but can be more effective in poisoning than \statp\ if properly configured.

\item In a health care case study, we find that our \optp\ attack can cause half of patients' Warfarin dosages to change by an average of \revision{139.31\%}. One tenth of these patients can have their dosages changed by \revision{359\%}, demonstrating the devastating consequences of poisoning.

\item We find that both our statistical and optimization attacks have good transferability properties, and still perform well with minimal difference in accuracy, when applied to different training sets.

\end{itemize}

\begin{table}
\scriptsize
\begin{center}
\begin{tabular}{|c|c|c|c|c|c|}
\hline
Quantile & Initial & Ridge & LASSO  \\
& Dosage & Diff & Diff\\
\hline

0.1 & 15.5 & 31.54\% & 37.20\% \\
0.25 & 21 & 87.50\%  & 93.49\% \\
0.5 & 30 & 150.99\% & 139.31\% \\
0.75 & 41.53 & 274.18\% & 224.08\% \\
0.9 & 52.5 & 459.63\% &  358.89\% \\
\hline
\end{tabular}
\end{center}
\caption{\revision{Initial dosage distribution (mg/wk) and percentage difference between original and predicted dosage after \optp\ attack at 20\% poisoning rate (health care dataset)}. }
\label{tab:dosage}
\end{table}

\begin{table}
\scriptsize
\begin{center}
\begin{tabular}{|c|c|c|c|c|c|}
\hline
Dataset & Attack & LASSO & Ridge\\
\hline

Health & \optp & 1.092 & 1.301  \\
 & \rmml & 0.971 & 0.927 \\
\hline
Loan & \optp & 1.028 & 1.100  \\
 & \rmml & 1.110 & 0.989 \\
\hline
House & \optp & 1.779 & 0.479  \\
 & \rmml & 1.034 & 0.886  \\
\hline
\end{tabular}
\end{center}
\caption{\revision{Transferability of \optp\ and \statp\ attacks. Presented are the ratio of the MSE obtained with transferred attacks over original attacks. Values below 1 represent original attacks outperforming transferred attacks, while values above 1 represent transferred attacks outperforming original attacks.}}
\label{tab:surrogate}
\end{table}

\subsection{Defense algorithms}
\label{sec:exp-defense}

\begin{figure*}[tbhp]
\centering
\vspace{-0.3cm}
\begin{subfigure}[t]{0.66\columnwidth}
\includegraphics[width=\columnwidth]{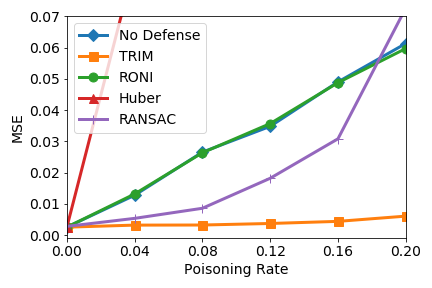}
\caption{Health Care Dataset}
\end{subfigure}
\begin{subfigure}[t]{0.66\columnwidth}
\includegraphics[width=\columnwidth]{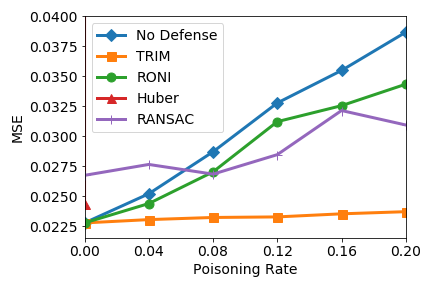}
\caption{Loan Dataset}
\end{subfigure}
\begin{subfigure}[t]{0.66\columnwidth}
\includegraphics[width=\columnwidth]{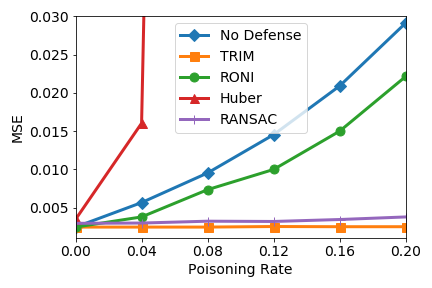}
\caption{House Price Dataset}
\end{subfigure}
\caption{\revision{MSE of defenses on ridge on the three datasets.  We exclude Chen from the graphs due to its large variability. Defenses are evaluated against the \optp\ attack. The only defense that consistently performs well in these situations is our proposed \trim\ defense, with RANSAC, Huber, and RONI actually performing worse than the undefended model in some cases. }}
\label{fig:defenseridge}
\vspace{-0.2cm}
\end{figure*}

\begin{figure*}[htbp]
\centering
\begin{subfigure}[t]{0.66\columnwidth}
\includegraphics[width=\columnwidth]{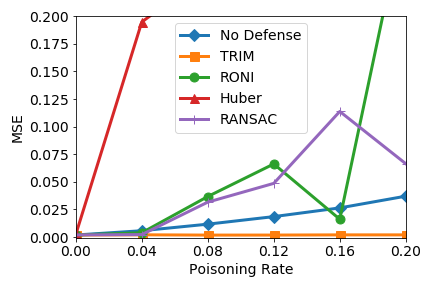}
\caption{Health Care Dataset}
\end{subfigure}
\begin{subfigure}[t]{0.66\columnwidth}
\includegraphics[width=\columnwidth]{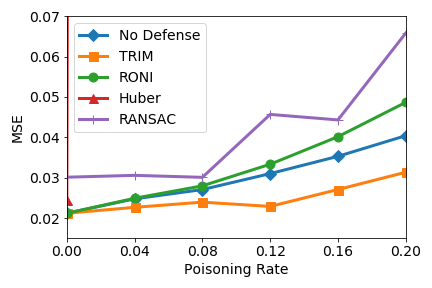}
\caption{Loan Dataset}
\end{subfigure}
\begin{subfigure}[t]{0.66\columnwidth}
\includegraphics[width=\columnwidth]{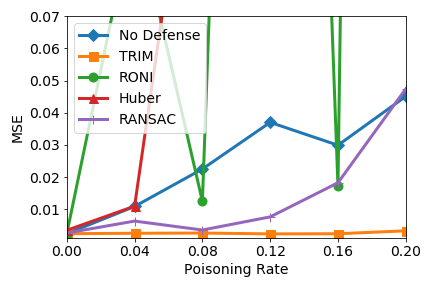}
\caption{House Price Dataset}
\end{subfigure}
\caption{\revision{MSE of defenses on LASSO. We exclude Chen from the graphs due to its large variability. Defenses are evaluated against the most effective attack \optp. As with ridge, the only defense that consistently performs well is our \trim\ defense. }}
\label{fig:defenselasso}
\vspace{-0.55cm}
\end{figure*}

In this section, we evaluate our proposed \trim\ defense and other existing defenses from the literature (Huber, RANSAC, Chen, \revision{ and RONI}) against the best performing optimization attacks from the previous section (\optp). We test two well-known methods from robust statistics: Huber regression~\cite{Huber64} and \RANSAC~\cite{RANSAC}, available as implementations in Python's sklearn package. Huber regression modifies the loss function from the standard MSE to reduce the impact of outliers. It does this by using quadratic terms in the loss function for points with small residuals and linear terms for points with large residuals. \revision{The threshold where linear terms start being used is tuned by a parameter $\epsilon>1$, which we set by selecting the best of 5 different values: $\lbrace 1.1,1.25,1.35, 1.5, 2\rbrace$.} RANSAC builds a model on a random sample of the dataset, and computes the number of points that are outliers from that model. If there are too many outliers, the model is rejected and a new model is computed on a different random dataset sample. \revision{The size of the initial random sample is a parameter that requires tuning - we select 5 different values, linearly interpolating from 25 to the total number of clean data, and select the value which has the lowest MSE. If the number of outliers is smaller than the number of poisoning points, we retain the model.}

We also compare against our own implementation of the robust regression method by Chen et al.~\cite{Chen13} from the machine learning community, \revision{and the RONI method from the security community~\cite{Nelson08}}. Chen picks the features of highest influence using an outlier resilient dot product computation. We vary the number of features selected by Chen (the only parameter in the algorithm) between 1 and 9 and pick the best results. We find that Chen has highly variable performance, having MSE increases of up to a factor of \revision{63,087} over the no defense models, and we decided to not include it in our graphs. The poor performance of Chen is due to the strong assumptions of the technique (sub-Gaussian feature distribution and covariance matrix $\Xb^T \Xb= \mathbb{I}$.), that are not met by our real world datasets. \revision{While we were able to remove the assumption that all features had unit variance through robust scaling (using the robust dot product provided by their work), removing the covariance terms would require a robust matrix inversion, which we consider beyond the scope of our work.}

\revision{RONI (Reject On Negative Impact) was proposed in the context of spam filters and attempts to identify outliers by observing the performance of a model trained with and without each point. If the performance degrades too much on a sampled validation set (which may itself contain outliers), the point is identified as an outlier and not included in the model. This method has some success in the spam scenario due to the ability of an adversary to send a single spam email with all words in dictionary, but is not applicable in other settings in which the impact of each point is small. We set the size of the validation set to 50, and pick the best points on average from 5 trials, as in the original paper. The size of the training dataset is selected from the same values as RANSAC's initial sample size.}

We show in Figures~\ref{fig:defenseridge} and \ref{fig:defenselasso} MSEs for ridge and LASSO regression for the original model (no defense), the \trim\ algorithm, as well as the Huber, RANSAC, \revision{and RONI} methods. We pose three research questions next:

\subsubsection{Question 1: Are known methods effective at defending against poisoning attacks?} \revision{As seen in Figures~\ref{fig:defenseridge} and~\ref{fig:defenselasso}, existing  techniques (Huber regression, RANSAC, and RONI), are not consistently effective at defending against our presented attacks. Huber frequently performs worse than no defense at all. For LASSO models, the \optp\ attack increases MSE over unpoisoned models by a factor of 19.74 (on the health care dataset at $\alpha=20\%$). Rather than decreasing the MSE, Huber regression in fact increases the MSE over undefended ridge models by a factor of 2781. RONI also increases the MSE of undefended models by a factor of 8.06\%. RANSAC performs the best, only increasing MSE by a factor of 1.79. The reason for this poor performance is that robust statistics methods are designed to remove or reduce the effect of outliers from the data, while RONI can only identify outliers with high impact on the trained models. Our attacks generate inlier points that have similar distribution as the training data, making these previous defenses ineffective.}

\subsubsection{Question 2: What is the robustness of the new defense \trim\ compared to known methods?} \revision{Our \trim\ technique is much more effective at defending against all attacks than the existing techniques are. For ridge and LASSO regression, the median MSE increase over all datasets and attacks is only 6.1\%, and only 20\% of attacks cause more than 27.2\% MSE increase. Interestingly, on the health care dataset the MSE of \trim\ is lower by 3.47\% compared to unpoisoned models for LASSO regression at $\alpha=8\%$. \trim\ achieves MSEs much lower than existing methods, improving Huber by a factor of 131.8, RANSAC by a factor of 17.5, and RONI by a factor of 20.28. This demonstrates that the \trim\ technique is a significant improvement over prior work at defending against these poisoning attacks.}

\subsubsection{Question 3: What is the running time of various defense algorithms?}
\revision{All of the defenses we evaluated ran in a reasonable amount of time, but \trim\ is the fastest. For example, on the house dataset, \trim\ took an average of 0.02 seconds, RANSAC took an average of 0.33 seconds, Huber took an average of 7.86 seconds, RONI took an average of 15.69 seconds and Chen took an average of 0.83 seconds. On the health care dataset, \trim\ took an average of 0.02 seconds, RANSAC took an average of 0.30 seconds, Huber took an average of 0.37 seconds, RONI took an average of 14.80 seconds, and Chen took an average of 0.66 seconds. There is some variance depending on the dataset and the number of iterations to convergence, but \trim\ is consistently faster than other methods.}

\vspace{0.1cm}
\noindent{\bf Summary of defense results.}
\begin{itemize}
\item We find that previous defenses (RANSAC, Huber, Chen, \revision{and RONI}) do not work very well against our poisoning attacks. As seen in Figures~\ref{fig:defenseridge}-\ref{fig:defenselasso}, previous defenses can in some cases increase the MSEs over unpoisoned models.
\item Our proposed defense, \trim, works very well and significantly improves the MSEs compared to existing defenses. The median MSE increase over all attacks, models, and datasets is only 6.1\%, and only 20\% of attacks cause more than 27.2\% MSE increase. In some cases \trim\ achieves lower MSEs than those of unpoisoned models \revision{(by 3.47\%)}.
\item All of the defenses we tested ran reasonably quickly. \trim\ was the fastest, running in an average of \revision{0.02 seconds} on the house price dataset.
\end{itemize}

\vspace{0.4cm}
\section{Related work}
\label{sec:related}

The security of machine learning has received a lot of attention in different communities (e.g.,~\cite{dalvi2004adversarial,lowd2005adversarial,Huang2011adversarial,barreno2006can,biggio14-tkde,biggio18}). Different types of attacks against learning algorithms have been designed and analyzed, including  \emph{evasion attacks} (e.g.,~\cite{biggio13-ecml,Szegedy14,Goodfellow14,Srndic14,Papernot16,Papernot17,Carlini17}), and \emph{privacy attacks} (e.g.,~\cite{fredrikson2014privacy,Membership,Fredrikson15}).  In \emph{poisoning attacks} the attacker manipulates training data to violate system \emph{availability} or \emph{integrity}, \ie, to cause a denial of service or the misclassification of specific data points, respectively~\cite{Biggio2012poisoning,Huang2011adversarial,mei15-aaai,Xiao15,biggio17-aisec}.

In the security community, practical poisoning attacks have been demonstrated in  worm signature generation~\cite{Perdisci06,newsome2006paragraph}, spam filters~\cite{Nelson08}, network traffic analysis systems for detection of DoS attacks~\cite{ANTIDOTE}, sentiment analysis on social networks~\cite{Newell14}, crowdsourcing~\cite{wang14-usenix}, and  health-care~\cite{mozaffari2014systematic}. In supervised learning settings, Newsome et al.~\cite{newsome2006paragraph} have proposed \emph{red herring attacks} that add spurious words (features) to reduce the maliciousness score of an instance. These attacks work against conjunctive and Bayes learners for worm signature generation. Perdisci et al.~\cite{Perdisci06} practically demonstrate how an attacker can inject noise in the form of suspicious flows to mislead worm signature classification.
Nelson et al.~\cite{Nelson08} present both availability and targeted poisoning attacks against the public SpamBayes spam classifier. Venkataraman et al. \cite{venkataraman2008limits} analyze the theoretical limits of poisoning attacks against signature generation algorithms by proving bounds on false positives and false negatives for certain adversarial capabilities.

In unsupervised settings, Rubinstein~\etal~\cite{ANTIDOTE} examined how an attacker can  systematically inject traffic to mislead a PCA anomaly detection system for DoS attacks. Kloft and Laskov~\cite{kloft2012security} demonstrated \emph{boiling frog attacks} on centroid anomaly detection that involve incremental contamination of systems using retraining. Theoretical online centroid anomaly detection analysis has been discussed in \cite{kloft2012security}. Ciocarlie et al.~\cite{Ciocarlie08} discuss sanitization methods against time-based anomaly detectors in which multiple micro-models are built and compared over time to identify poisoned data. The assumption in their system is that the attacker only controls data generated during a limited time window.

\revision{
In the machine learning and statistics communities, earliest treatments consider the robustness of learning to noise, including the extension of the PAC model by Kearns and Li~\cite{kearns1993learning}, as well as work on robust statistics~\cite{huber2011robust,tyler2008robust,Xu10, RobustPCA}. In adversarial settings, robust methods for dealing with arbitrary corruptions of data have been proposed in the context of linear regression~\cite{Chen13}, high-dimensional sparse regression~\cite{Chen13sparse}, logistic regression~\cite{Feng14}, and linear regression with low rank feature matrix~\cite{Liu17-AISEC}. These methods are based on assumptions on training data such as sub-Gaussian distribution, independent features, and low-rank feature space.  Biggio et al.~\cite{Biggio2012poisoning} pioneered the research of optimizing poisoning attacks for kernel-based learning algorithms such as SVM.
Similar techniques were later generalized to optimize data poisoning attacks for several other important learning algorithms,
such as feature selection for classification~\cite{Xiao15}, topic modeling~\cite{poison-lda}, autoregressive models~\cite{poison-autoregressive},  collaborative filtering~\cite{Li2016data}, and simple neural network architectures~\cite{biggio17-aisec}.
}

\vspace{0.2cm}
\section{Conclusions}
\label{sec:conclusions}

We perform the first systematic study on poisoning attacks and their countermeasures for linear regression models. \revision{We propose a new optimization framework for poisoning attacks and a fast statistical attack that requires minimal knowledge of the training process.} We also take a principled approach in designing a new robust defense algorithm that largely outperforms existing robust regression methods. We extensively evaluate our proposed attack and defense algorithms on several datasets from health care, loan assessment, and real estate domains. We demonstrate the real implications of poisoning attacks in a case study health application.
We finally believe that our work will inspire future research towards developing more secure learning algorithms against poisoning attacks. 

\vspace{-0.5pt}
\section*{Acknowledgements}
We thank Ambra Demontis for confirming the attack results on ridge regression, and Tina Eliassi-Rad, Jonathan Ullman, and Huy Le Nguyen for discussing poisoning attacks. We also thank the anonymous reviewers for all the extensive feedback received during the review process.

This work was supported in part by FORCES (Foundations Of Resilient CybEr-Physical Systems), which receives support from the National Science Foundation (NSF award numbers CNS-1238959, CNS-1238962, CNS-1239054, CNS-1239166), DARPA under grant no. FA8750-17-2-0091, Berkeley Deep Drive, and Center for Long-Term Cybersecurity.

This work was also partly supported by the EU H2020 project ALOHA, under the European Union's Horizon 2020 research and innovation programme (grant no. 780788).


\appendices

\section{Theoretical Analysis of Linear Regression}
\label{app:att-th}

\revision{We prove the equivalence of $\attackloss_{\rm tr}$ and $\attackloss^\prime_{\rm tr}$ with the following theorem.}

\begin{restatable}{thm}{objequiv}
\revision{Consider OLS regression. Let $\trainset=\lbrace \Xb,Y\rbrace$ be the original dataset, $\thetapar_0 = (\wb_0,b_0)$ the parameters of the original OLS model, and $\trainset'=\lbrace \Xb,Y'\rbrace$ the dataset where $Y'$ consists of predicted values from $\thetapar_0$ on $\Xb$. Let $\poisonset=\lbrace \Xb_p,Y_p\rbrace$ be a set of poisoning points. Then
\[
\argmin_{\thetapar} \defenderloss(\trainset \cup \poisonset, \thetapar) = \argmin_{\thetapar} \defenderloss(\trainset' \cup \poisonset, \thetapar)
\]
Furthermore, we have $\frac{\partial \attackloss_{\rm tr}}{\partial \attackpar}=\frac{\partial \attackloss^\prime_{\rm tr}}{\partial \attackpar}$, where $\attackpar=(\xb_c,y_c)$. Then the optimization problem for the adversary, and the gradient steps the adversary takes, are the same whether $\attackloss_{\rm tr}$ or $\attackloss^\prime_{\rm tr}$ is used.}
\label{thm:objequiv}
\end{restatable}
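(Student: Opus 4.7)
The plan is to show that replacing $Y$ with the original model's predictions $Y'$ only shifts the training loss by a quantity that does not depend on $\thetapar$, so both the argmin and the $\thetapar$-gradient of any outer loss coincide. The key leverage is the OLS normal equations for $\thetapar_0=(\wb_0,b_0)$ on the pristine set $\trainset$: writing the residuals $r_i=y_i-f(\xb_i,\thetapar_0)$, the stationarity of $\thetapar_0$ gives $\sum_i r_i\xb_i=\vct 0$ and $\sum_i r_i=0$.

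First I would prove part one. Since $\poisonset$ contributes identically to $\defenderloss(\trainset\cup\poisonset,\thetapar)$ and $\defenderloss(\trainset'\cup\poisonset,\thetapar)$, the two losses differ only through the pristine terms. Using the identity $(a-b)^2-(a-c)^2=(c-b)(2a-b-c)$ with $a=f(\xb_i,\thetapar)$, $b=y_i$, $c=f(\xb_i,\thetapar_0)$, summing over $i$, and expanding $f(\xb_i,\cdot)=\wb^{\T}\xb_i+b$, the cross terms $\sum_i r_i f(\xb_i,\thetapar)$ and $\sum_i r_i f(\xb_i,\thetapar_0)$ both vanish by the two normal equations, leaving the $\thetapar$-independent quantity $\tfrac{1}{n+p}\sum_i r_i^2$. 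Hence the two OLS objectives differ by a constant in $\thetapar$ and share the same minimizer, establishing the first claim.

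Next I would deduce part two from part one plus the same constant-offset calculation applied to the outer loss. The outer objective $\attackloss_{\rm tr}(\trainset,\thetapar)$ and the alternative $\attackloss^\prime_{\rm tr}(\trainset,\thetapar)$ likewise differ only in whether the label at $\xb_i$ is $y_i$ or $f(\xb_i,\thetapar_0)$, and repeating the expansion above (now without any poisoning contribution) shows the difference is again the constant $\tfrac{1}{n}\sum_i r_i^2$. Therefore $\nabla_{\thetapar}\attackloss_{\rm tr}=\nabla_{\thetapar}\attackloss^\prime_{\rm tr}$ pointwise in $\thetapar$. Combined with part one, which implies that the implicit map $\attackpar\mapsto\thetaoptpoison(\attackpar)$ and thus its Jacobian $\nabla_{\attackpar}\thetaoptpoison$ are identical in the two scenarios, the chain rule $\tfrac{\partial\attackloss_{\rm tr}}{\partial\attackpar}=\nabla_{\attackpar}\thetaoptpoison^{\T}\nabla_{\thetapar}\attackloss_{\rm tr}$ yields the desired gradient equality, so the adversary's gradient-ascent trajectory is identical under either objective.

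The main obstacle is purely bookkeeping: to ensure the cross-term $\sum_i r_i f(\xb_i,\thetapar)$ actually vanishes, I need the bias component of the normal equations ($\sum_i r_i=0$), which relies on the intercept being included in $\thetapar_0$; if the model is taken without bias, the argument still goes through provided only that $\sum_i r_i\xb_i=\vct 0$, since then $f(\xb_i,\thetapar)=\wb^{\T}\xb_i$ has no constant term. I would also briefly remark that the argument is specific to OLS and breaks for the regularized variants (ridge, LASSO, elastic net), because the regularizer perturbs the normal equations so that $\sum_i r_i\xb_i\neq\vct 0$ in general; this is consistent with the paper restricting the theorem statement to OLS.
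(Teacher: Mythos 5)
Your proof is correct, and it reaches both conclusions by a somewhat different decomposition than the paper's. The two arguments rest on the same underlying fact---the OLS residuals $r_i = y_i - f(\xb_i,\thetapar_0)$ are orthogonal to the bias-augmented design, i.e.\ $\Xb^{\T}(Y-Y')=\vct 0$ and $\sum_i r_i=0$---but the paper extracts it from the closed-form solution $(\Xb^{\T}\Xb)^{-1}\Xb^{\T}Y$ (after first observing that $\defenderloss(\trainset',\thetapar_0)=0$ forces $\thetapar_0$ to minimize both clean objectives), then (i) substitutes $\Xb^{\T}Y=\Xb^{\T}Y'$ into the closed form of the poisoned estimator to conclude the two learned models are literally identical, and (ii) writes both outer gradients as $\tfrac{2}{n}(\Xb\thetapar-\,\cdot\,)^{\T}\Xb\,\tfrac{\partial\thetapar}{\partial\attackpar}$ so that their difference $\tfrac{2}{n}(Y-Y')^{\T}\Xb\,\tfrac{\partial\thetapar}{\partial\attackpar}$ vanishes. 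You instead obtain the orthogonality directly from the first-order conditions and show that each pair of objectives differs by a $\thetapar$-independent constant ($\tfrac{1}{n+\npois}\sum_i r_i^2$ for the inner losses, $\tfrac{1}{n}\sum_i r_i^2$ for the outer ones), which delivers the equal argmins and the equal gradients in one stroke. Your route is marginally more general---it needs no invertibility of $\Xb^{\T}\Xb$ or of the poisoned Gram matrix and tolerates non-unique minimizers---while the paper's route buys the stronger explicit statement that the fitted parameters themselves coincide. Your closing caveats are also accurate: the $\sum_i r_i=0$ condition does come from the intercept's normal equation, and the constant-offset argument genuinely breaks once a regularizer perturbs the stationarity conditions, consistent with the theorem being stated for OLS (the paper's aside that the first claim extends to ridge concerns only the inner, poisoned fit and still presumes $\thetapar_0$ is the OLS solution on the clean data).
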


\begin{proof}
\revision{We begin by showing that
\[
\argmin_{\thetapar} \defenderloss(\trainset, \thetapar) = \argmin_{\thetapar} \defenderloss(\trainset', \thetapar).
\]
By definition, we have $\thetapar_0 = \argmin_{\thetapar}\defenderloss(\trainset, \thetapar)$. In $Y'$, $y_i' = f(\xb_i,\thetapar_0)$, so $\defenderloss(\trainset', \thetapar_0)=0$. But $\defenderloss \ge 0$, so $\thetapar_0 = \argmin_{\thetapar}\defenderloss(\trainset', \thetapar)$.\\}

\revision{We can use this to show that $\Xb^T Y = \Xb^T Y'$. Recall that the closed form expression for OLS regression trained on $\Xb,Y$ is $\thetapar = (\Xb^T \Xb)^{-1} \Xb^T Y$. Because $\thetapar_0$ is the OLS model for both $\trainset,\trainset'$ , we have
\[
(\Xb^T \Xb)^{-1}(\Xb^T Y) = (\Xb^T \Xb)^{-1}(\Xb^T Y'),
\]
but $(\Xb^T \Xb)^{-1}$ is invertible, so $\Xb^T Y = \Xb^T Y'$. }
We can use this to show that $\argmin_{\thetapar} \defenderloss(\trainset \cup \poisonset, \thetapar)= \argmin_{\thetapar} \defenderloss(\trainset' \cup \poisonset, \thetapar)$ for any $\poisonset$. Consider the closed form expression for the model learned on $\trainset \cup \poisonset$:

\begin{eqnarray*}
& (\Xb^T \Xb + \Xb^T_p X_p)^{-1}(\Xb^T Y + \Xb^T_p Y_p) \\
= &  (\Xb^T \Xb + \Xb^T_p \Xb_p)^{-1} \cdot (\Xb^T Y' + \Xb^T_p Y_p) \\
\end{eqnarray*}
\noindent which is exactly the model learned on $\trainset' \cup \poisonset$. So the learned models for the two poisoned datasets are the same. Note that this also holds for ridge regression, where the Hessian has a $\lambda I$ term added, so it is also invertible.

\revision{We proceed to use $\Xb^T Y = \Xb^T Y'$ again to show that $\frac{\partial \attackloss_{\rm tr}}{\partial \attackpar}=\frac{\partial \attackloss^\prime_{\rm tr}}{\partial \attackpar}$.
\[
\frac{\partial \attackloss_{\rm tr}}{\partial \attackpar} = \frac{2}{n}(\Xb \thetapar-Y)^T\Xb\frac{\partial\thetapar}{\partial \attackpar}
\]
\[
\frac{\partial \attackloss^\prime_{\rm tr}}{\partial \attackpar} = \frac{2}{n}(\Xb \thetapar-Y')^T\Xb\frac{\partial\thetapar}{\partial \attackpar}
\]
So the difference between the gradients is
\[
\frac{\partial \attackloss^\prime_{\rm tr}}{\partial \attackpar}-\frac{\partial \attackloss_{\rm tr}}{\partial \attackpar}=\frac{2}{n}(Y-Y')^T\Xb\frac{\partial\thetapar}{\partial \attackpar}=\mathbf{0}.
\]
Then both the learned parameters and the gradients of the objectives are the same regardless of the poisoned data added.
}
\end{proof}

\revision{We can now perform the derivation of the exact form of the gradient of $\attackloss^\prime_{\rm tr}$. We have:
\[
\frac{\partial \attackloss^\prime_{\rm tr}}{\partial \attackpar} = \frac{2}{n}\sum_{i=1}^n((\wb-\wb_0)^T\xb_i+(b-b_0))\left(\xb_i^T\frac{\partial\wb}{\partial \attackpar} + \frac{\partial b}{\partial \attackpar}\right).
\]
The right hand side can be rearranged to
\[
(\wb-\wb_0)^T\left(\mathbf{\Sigma}\frac{\partial\wb}{\partial \attackpar} +\mathbf{\mu}\frac{\partial b}{\partial \attackpar}\right) + (b-b_0)\left(\mathbf{\mu}^T\frac{\partial\wb}{\partial \attackpar} +\frac{\partial b}{\partial \attackpar}\right),
\]
but the terms with gradients can be evaluated using the matrix equations derived from the KKT conditions from Equation~\ref{eq:grad-theta-yc}, which allows us to derive the following:
\begin{eqnarray*}
\frac{\partial \attackloss^\prime_{\rm tr}}{\partial\xb_c} &=& \frac{2}{n}((\wb_0-\wb)^T M+(b_0-b)\wb^T\\
 &=& \frac{2}{n}(f(\xb_c,\thetapar)-f(\xb_c,\thetapar_0))(\wb_0-2\wb)^T\\
\frac{\partial \attackloss^\prime_{\rm tr}}{\partial y_c} &=& \frac{2}{n}(f(\xb_c,\thetapar) - f(\xb_c,\thetapar_0)).
\label{eqn:linreg-grad}
\end{eqnarray*}}

\section{Analysis of \trim\ algorithm}
\label{app:analysis}

\ignore{
\begin{figure*}[htbp]
\centering
\begin{subfigure}[t]{0.66\columnwidth}
\includegraphics[width=\columnwidth]{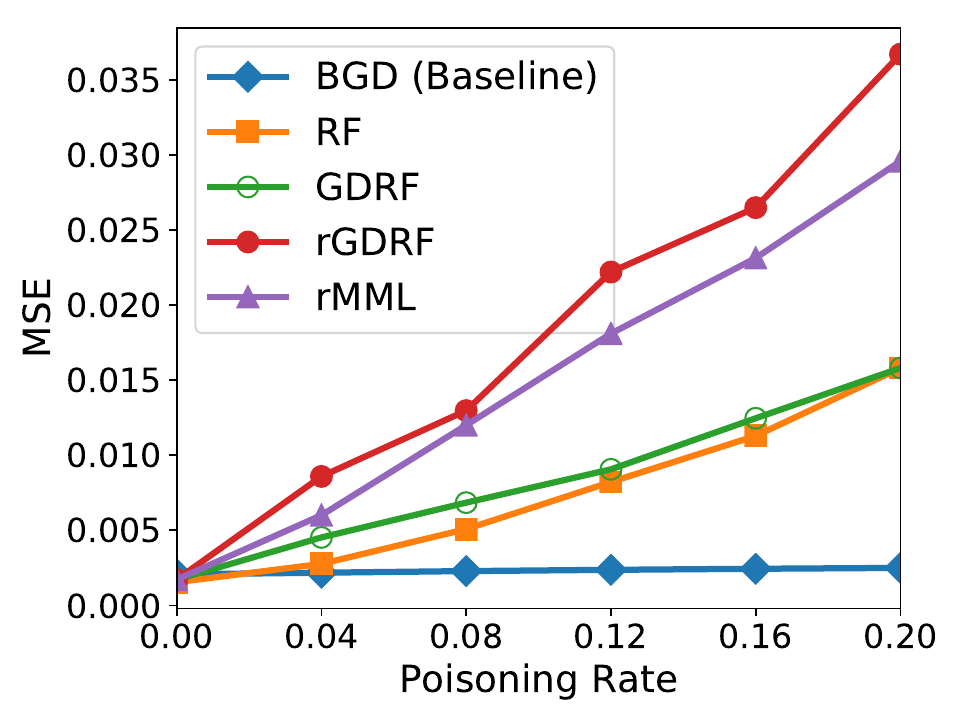}
\caption{Health Care Dataset}
\label{fig:health-enet-attacks}
\end{subfigure}
\begin{subfigure}[t]{0.66\columnwidth}
\includegraphics[width=\columnwidth]{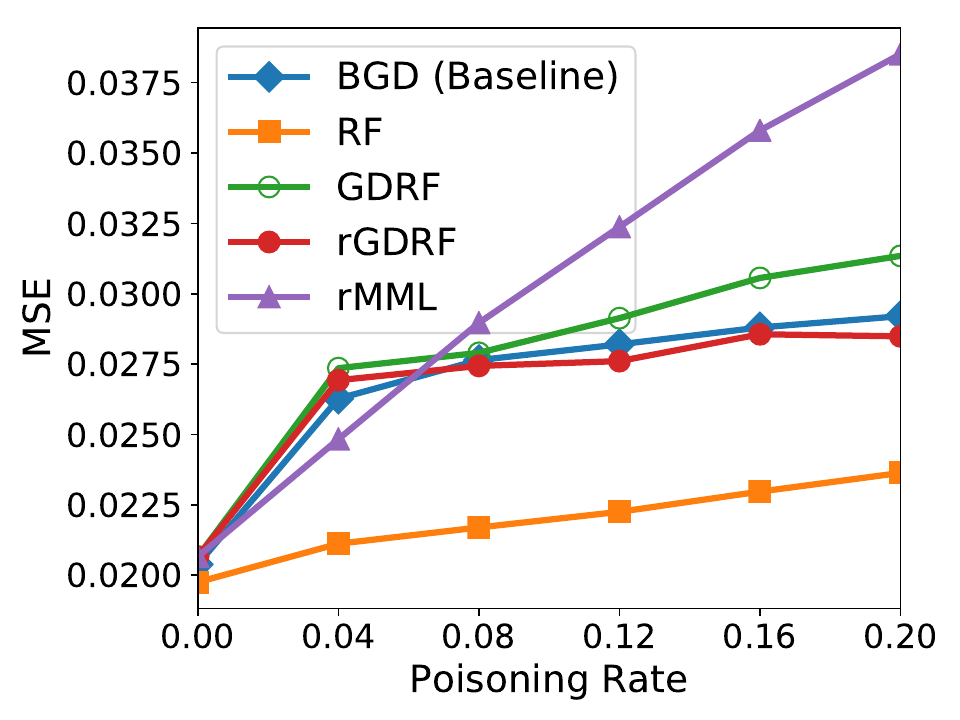}
\caption{Loan Dataset}
\label{fig:loan-enet-attacks}
\end{subfigure}
\begin{subfigure}[t]{0.66\columnwidth}
\includegraphics[width=\columnwidth]{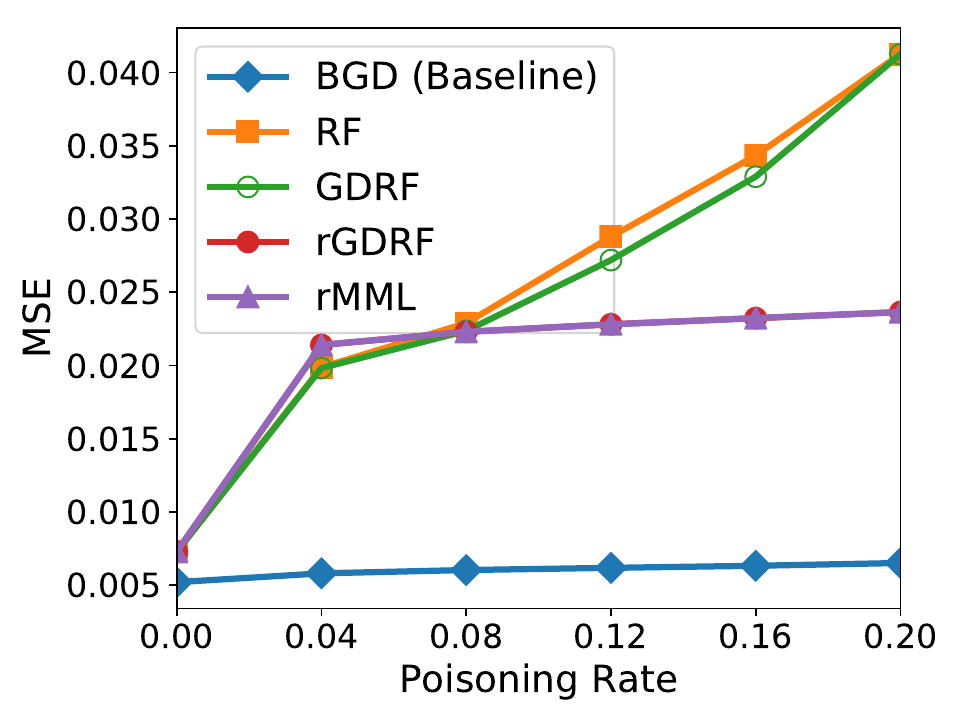}
\caption{House Price Dataset}
\label{fig:house-enet-attacks}
\end{subfigure}

\caption{MSE of attacks on Elastic Net on the three datasets. Similar to ridge and Lasso regression, we find that our new attacks (\rgdrf\ and \rmml) are able to poison the dataset very effectively, much more so than prior work (\baseline).}
\label{fig:enet-attacks}
\end{figure*}

\begin{figure*}[htbp]
\centering
\begin{subfigure}[t]{0.66\columnwidth}
\includegraphics[width=\columnwidth]{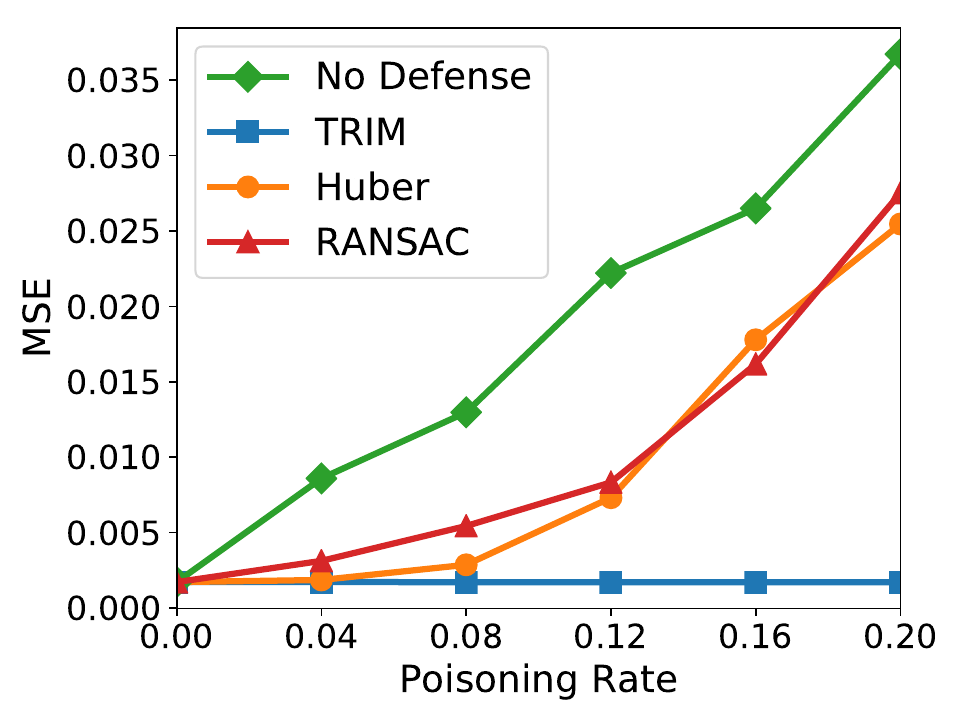}
\caption{Health Care Dataset}
\label{fig:health-enet-defenses}
\end{subfigure}
\begin{subfigure}[t]{0.66\columnwidth}
\includegraphics[width=\columnwidth]{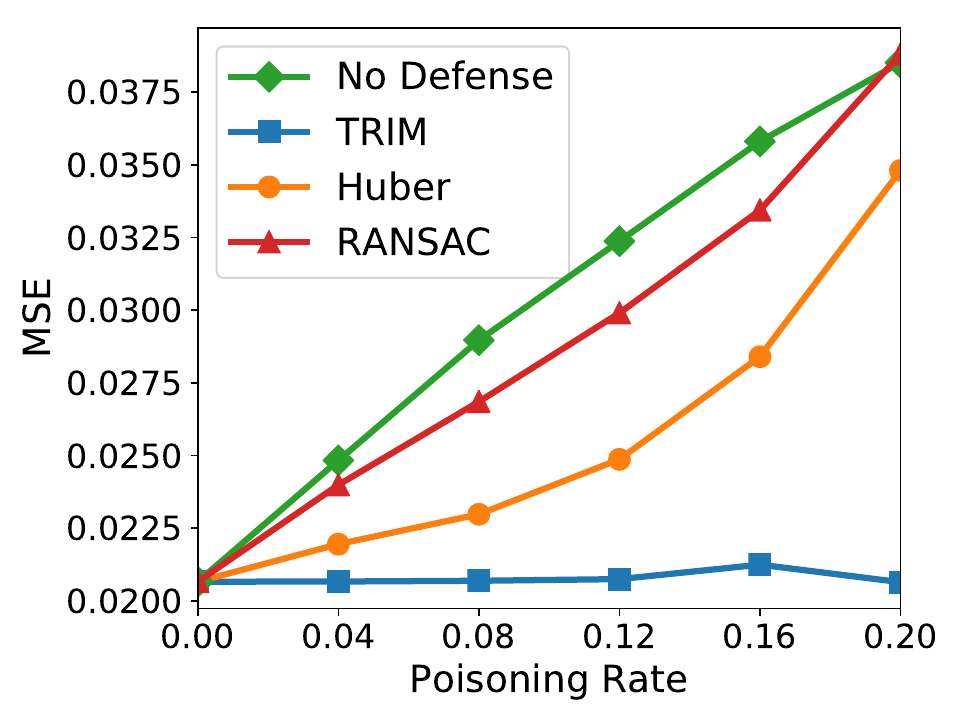}
\caption{Loan Dataset}
\label{fig:loan-enet-defenses}
\end{subfigure}
\begin{subfigure}[t]{0.66\columnwidth}
\includegraphics[width=\columnwidth]{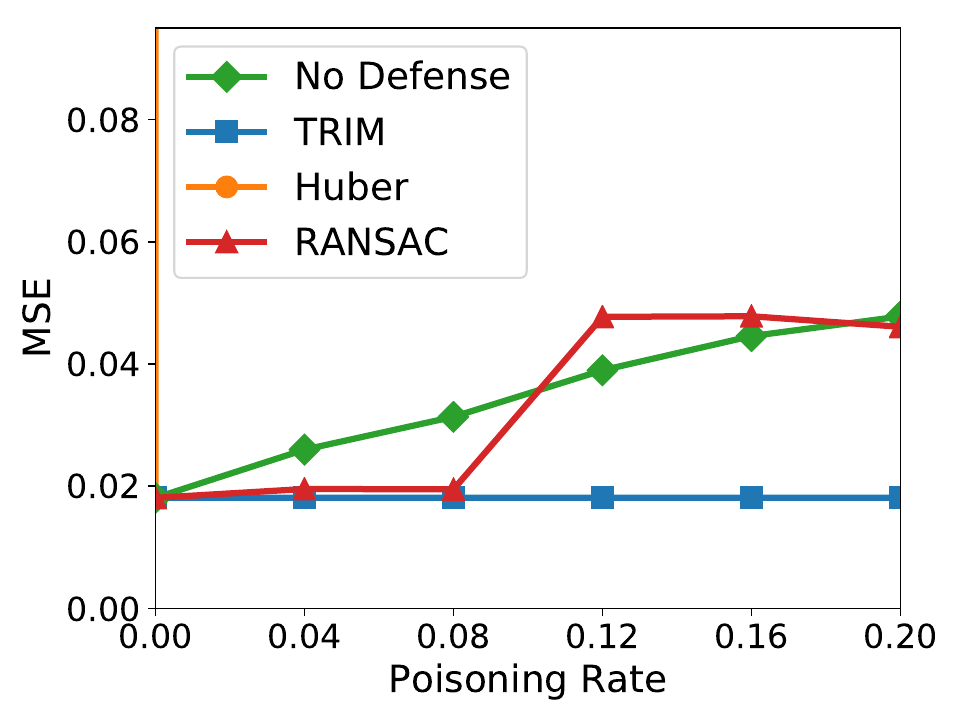}
\caption{House Price Dataset}
\label{fig:house-enet-defenses}
\end{subfigure}
\caption{MSE of defenses on Elastic Net on the three datasets. As in previous experiments, we find that our defense, \trim, is much better than previously proposed defenses, RANSAC and Huber.}
\label{fig:enet-defenses}
\end{figure*}
}

We present here an analysis on the convergence and estimation properties of the \trim\
algorithm.

\vspace{0.1cm}
\noindent {\bf Convergence.} First, Algorithm~\ref{alg:trim} can be proven to always terminate by the following theorem.

\trimconv* 

\begin{proof} We first prove that for each iteration $i$ that does not terminate the loop, we have $R^{(i)}<R^{(i-1)}$. Since each subset of $\{1,...,n\}$ with size $n-\npois$ uniquely corresponds to one value $R$, there is only a finite number of possible $R$ during training. If the algorithm does not terminate, then there will be an infinite long sequence of $R^{(i)}$, contradicting that the set of all possible $R$ is finite.

We only need to show $R^{(i)}\leq R^{(i-1)}$, as the algorithm terminates when $R^{(i)}= R^{(i-1)}$. In fact, we have
\begin{eqnarray}
R^{(i)}=\defenderloss(\alltrain^{\mathcal{I}^{(i)}}, \thetapar^{(i-1)}) & \leq & \defenderloss(\alltrain^{\mathcal{I}^{(i-1)}}, \thetapar^{(i-1)})\nonumber\\
&\leq&  \defenderloss(\alltrain^{\mathcal{I}^{(i-1)}}, \thetapar^{(i-2)})=R^{(i-1)}.\nonumber
\end{eqnarray}
The first inequality is because of the definition of $\mathcal{I}^{(i)}$ (line 8), while the second is due to the definition of $\thetapar^{(i)}$ (line 9).
\end{proof}

\vspace{0.1cm}
\noindent {\bf Estimation bound.} We now prove Theorem~\ref{thm:2}. We restate it below.

\addtocounter{thm}{-2}

\begin{thm} Let \trainset\ denote the original training data, $\hat{\thetapar}$ the global optimum for (\ref{eq:opt}), and $\thetaopt=\argmin_{\thetapar} \defenderloss(\trainset, \thetapar)$ the estimator in the ideal world on pristine data. Assuming $\alpha<1$, there exist a subset $\mathcal{D}' \subset \trainset$ of $(1-\alpha)\cdot n$ pristine data samples such that
\begin{equation}
\mathrm{MSE}(\mathcal{D}',\hat{\thetapar}) \leq \bigg( 1+\frac{\alpha}{1-\alpha}\bigg)  \defenderloss(\trainset,\thetaopt)
\end{equation}
\end{thm}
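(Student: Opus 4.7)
The plan is to exhibit $\mathcal{D}'$ as a particular subset of the pristine points that the \trim\ objective selected, and then to chain three inequalities: (i) an optimality inequality that compares the poisoned-world loss at $(\hat{\thetapar}, \hat{\mathcal{I}})$ to the ideal-world loss at $(\thetaopt, \{1,\dots,n\})$; (ii) a monotonicity inequality that drops poisoned residuals and the regularizer; and (iii) a renormalization step that converts the $1/n$-normalized sum into an MSE on $(1-\alpha)n$ points.

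First, let $\hat{\mathcal{I}} \subset \{1,\dots,N\}$ with $|\hat{\mathcal{I}}|=n$ be the index set realizing the global optimum $\hat{\thetapar}$ in~\eqref{eq:opt}, and set $\mathcal{D}_1 = \alltrain^{\hat{\mathcal{I}}} \cap \trainset$, the pristine points retained by \trim. Since $\hat{\mathcal{I}}$ contains $n$ points drawn from $\alltrain = \trainset \cup \poisonset$ with $|\poisonset|=\alpha n$, at most $\alpha n$ of them can be poisoned, so $|\mathcal{D}_1| \geq (1-\alpha)n$. Define $\mathcal{D}' \subset \mathcal{D}_1$ to be any subset of exactly $(1-\alpha)n$ pristine samples chosen to have the \emph{smallest} squared residuals under $\hat{\thetapar}$ among those in $\mathcal{D}_1$; then $\mathcal{D}'$ is a subset of $\trainset$ of the required size.

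Next, I invoke optimality. The pair $(\thetaopt, \{1,\dots,n\})$ is a feasible choice for the joint minimization in~\eqref{eq:opt} (it selects $n$ training indices and uses the ideal estimator). Hence
\begin{equation}
\defenderloss(\alltrain^{\hat{\mathcal{I}}}, \hat{\thetapar}) \;\leq\; \defenderloss(\trainset, \thetaopt).
\label{eq:proposal-opt}
\end{equation}
Because $\defenderloss(\alltrain^{\hat{\mathcal{I}}}, \hat{\thetapar}) = \frac{1}{n}\sum_{i\in\hat{\mathcal{I}}}(f(\xb_i,\hat{\thetapar})-y_i)^2 + \lambda\Omega(\hat{\wb})$, and both the poisoned-residual terms (those $i\in\hat{\mathcal{I}}\setminus\mathcal{D}_1$) and the regularizer $\lambda\Omega(\hat{\wb})$ are nonnegative, dropping them gives
\begin{equation}
\frac{1}{n}\sum_{i \in \mathcal{D}_1} \bigl(f(\xb_i,\hat{\thetapar}) - y_i\bigr)^2 \;\leq\; \defenderloss(\alltrain^{\hat{\mathcal{I}}}, \hat{\thetapar}) \;\leq\; \defenderloss(\trainset,\thetaopt).
\label{eq:proposal-drop}
\end{equation}

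Finally, since $\mathcal{D}'$ consists of the $(1-\alpha)n$ points of $\mathcal{D}_1$ with smallest residuals, the average squared residual on $\mathcal{D}'$ is at most the average on $\mathcal{D}_1$, hence at most $\frac{1}{(1-\alpha)n}$ times the sum appearing on the left of~\eqref{eq:proposal-drop}. Combining with~\eqref{eq:proposal-drop},
\begin{equation}
\mathrm{MSE}(\mathcal{D}',\hat{\thetapar}) \;\leq\; \frac{1}{(1-\alpha)n}\sum_{i\in\mathcal{D}_1}\bigl(f(\xb_i,\hat{\thetapar})-y_i\bigr)^2 \;\leq\; \frac{1}{1-\alpha}\defenderloss(\trainset,\thetaopt),
\end{equation}
and $\tfrac{1}{1-\alpha} = 1 + \tfrac{\alpha}{1-\alpha}$ gives the claim.

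There is not really a deep obstacle; the potentially delicate point is whether the bound survives the regularizer term in $\defenderloss$, which is handled by the fact that $\Omega(\wb) \geq 0$ and is simply discarded in~\eqref{eq:proposal-drop}. The only other subtlety is ensuring $|\mathcal{D}'| = (1-\alpha)n$ exactly (rather than $\geq$), which is why $\mathcal{D}'$ is taken as a smallest-residual sub-selection of $\mathcal{D}_1$: this both enforces the cardinality and weakens the MSE, which is what we need.
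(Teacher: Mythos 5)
Your proposal is correct and follows essentially the same route as the paper's proof: the feasibility/optimality inequality $\defenderloss(\alltrain^{\hat{\mathcal{I}}},\hat{\thetapar})\leq\defenderloss(\trainset,\thetaopt)$, the observation that $\hat{\mathcal{I}}$ contains at least $(1-\alpha)n$ pristine indices, and the $\tfrac{1}{1-\alpha}$ renormalization after discarding the nonnegative regularizer and poisoned residuals. The only (immaterial) differences are that you discard the regularizer earlier than the paper does, and that your smallest-residual sub-selection of $\mathcal{D}_1$ is unnecessary—any $(1-\alpha)n$ pristine points from $\hat{\mathcal{I}}$ work, since the sum over a subset of nonnegative terms is bounded by the full sum.
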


\begin{proof} Assume $\hat{\thetapar}=(\hat{\wb}, \hat{b}), \hat{\mathcal{I}}$ optimize (\ref{eq:opt}). We have:
\begin{equation}
  \defenderloss(\alltrain^{\hat{\mathcal{I}}}, \hat{\thetapar}) \leq \defenderloss(\trainset, \thetaopt).
  \label{eq:1}
\end{equation}

Since the adversary can poison at most $\alpha\cdot n$ data points, there exists a subset $\mathcal{I}'\subseteq\hat{\mathcal{I}}$ containing $(1-\alpha)n$ indexes corresponding to pristine data points. We define $\mathcal{D}'=\alltrain^{\mathcal{I}'}$.
Thus, we have
\begin{eqnarray}
\defenderloss(\mathcal{D}', \hat{\thetapar}) &=& \frac{1}{(1-\alpha)n} \sum_{(\xb, y)\in\mathcal{D}'} (\hat{\wb}^T \xb +\hat{b} - y)^2 + \lambda\Omega(\hat{\thetapar})\nonumber\\
&\leq& \frac{1}{(1-\alpha)n} \sum_{(\xb, y)\in\alltrain^\mathcal{I}} (\hat{\wb}^T \xb +\hat{b} - y)^2 + \lambda\Omega(\hat{\thetapar})\nonumber\\
&=& \frac{1}{1-\alpha} \defenderloss(\alltrain^{\hat{\mathcal{I}}}, \hat{\thetapar}) - \frac{1}{1-\alpha}\cdot\lambda\Omega(\hat{\theta}) +\lambda\Omega(\hat{\theta})\nonumber\\
&\leq& \frac{1}{1-\alpha} \defenderloss(\trainset, \thetaopt) - \frac{\alpha}{1-\alpha}\cdot\lambda\Omega(\hat{\thetapar})\nonumber\\
&\leq&  \bigg( 1 + \frac{\alpha}{1-\alpha}\bigg) \defenderloss(\trainset, \thetaopt). \label{eq:2}
\end{eqnarray}
Notice that in the second step, we apply the fact below:
\[\sum_{(\xb, y)\in\alltrain^\mathcal{I}} (\hat{\wb}^T \xb + \hat{b} - y)^2 = n [\defenderloss(\alltrain^\mathcal{I},\hat{\thetapar}) - \lambda\Omega(\hat{\thetapar})]\]
The second to last step is derived by applying Inequality (\ref{eq:1}), and the last step comes from
\[\lambda\Omega(\hat{\theta})\geq 0.\]
Further, we have
\begin{equation}
\mathrm{MSE}(\mathcal{D}', \hat{\thetapar})\leq \defenderloss(\mathcal{D}', \hat{\thetapar})\label{eq:4}
\end{equation}
By combining (\ref{eq:2}) and (\ref{eq:4}), we can get our conclusion.
\end{proof}

\section{Baseline attack}
\label{sec:optattacks}

\begin{table}
\begin{center}
\begin{tabular}{|c|c|c|}
\hline
{\small Parameter} & Purpose & Values \\
\hline
$\eta$ & Line Search Learning Rate & $\lbrace 0.01,0.03,0.05,0.1,$  \\
--- & --- & $0.3,0.5,1.0\rbrace$\\
$\beta$ & Line Search Learning Rate Decay & $\lbrace 0.75\rbrace$ \\
$\epsilon$ & Attack Stopping Condition & $10^{-5}$ \\
$\lambda$ & Regularization Parameter & {\scriptsize Set with Cross Validation} \\
\hline
\end{tabular}
\end{center}
\caption{Description of Parameters for Algorithm ~\ref{alg:poisoning}.}
\label{tab:pdfparams}
\end{table}
In this section, we discuss parameter setting for the baseline attack by Xiao et al.~\cite{Xiao15}. We perform experiments on the same dataset used by Xiao et al.~\cite{Xiao15} to test and optimize the baseline attack.

\vspace{0.02cm}
\noindent {\bf PDF dataset.} The PDF malware dataset is a classification dataset containing 5000 benign and 5000 malicious PDF files~\cite{Srndic14,contagio}. It includes 137 features, describing information such as size, author, keyword counts, and various timestamps. We pre-process the data by removing features that were irrelevant (e.g., file name) or had erroneous values (e.g., timestamps with negative values). We also use one-hot encoding for categorical features (replacing the categorical feature with a new binary feature for each distinct value) and apply the logarithm function to columns with large values (e.g., size, total pixels), resulting in 141 features.  Each feature is normalized by subtracting the minimum value, and dividing by its range, so that all these features are in [0,1].

\vspace{0.02cm}
\noindent {\bf Hyperparameters.} In order to analyze the baseline attack, we perform an experiment that reproduces exactly the setting from Xiao et al.~\cite{Xiao15}. We choose a random subset of 300 files for training and a non-overlapping subset of 5000 points for testing the models. To take advantage of our multi-core machines, we parallelize the code by allowing each core to run different instances of the {\bf for} loop body starting on line 6 in Algorithm~\ref{alg:poisoning}.

There are 3 hyperparameters that control the gradient step and convergence of the iterative search procedure in the algorithm ($\eta, \beta$,  and $\epsilon$). The $\eta$ parameter controls the step size taken in the direction of the gradient. We selected from 7 different values in a wide range, by testing each on 20\% poisoning and identifying the value with the largest MSE increase. The $\beta$ parameter controls the decay of the learning rate, so we take smaller steps as we get closer to the optimal value. We fixed this value to 0.75 and decayed (set $\eta\leftarrow \eta*\beta$) when a step did not make progress. We found this setting to work well on many problems. We fixed the $\epsilon$ parameter for attack stopping condition at 0.00001, and choose the $\lambda$ regularization parameter for the regression model with cross validation. Our parameter settings are detailed in Table~\ref{tab:pdfparams}).

\begin{figure}[tbhp]
\centering
\includegraphics[width=2.5in]{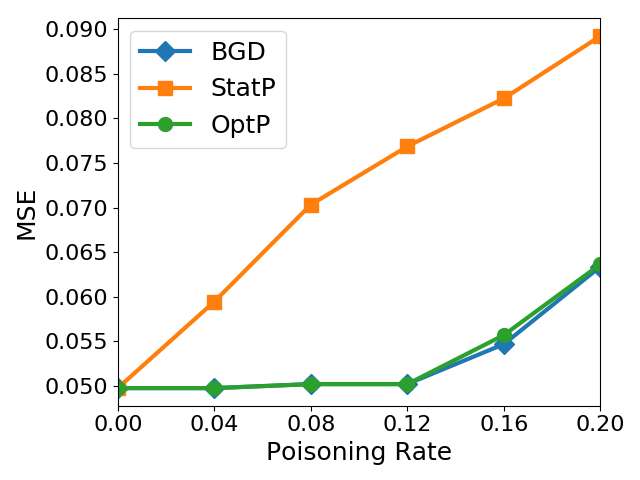}
\caption{\revision{Attack MSE on Contagio dataset for ridge.}}
\label{fig:contagio-ridge}
\end{figure}

\begin{figure}[h]
\centering
\includegraphics[width=2.5in]{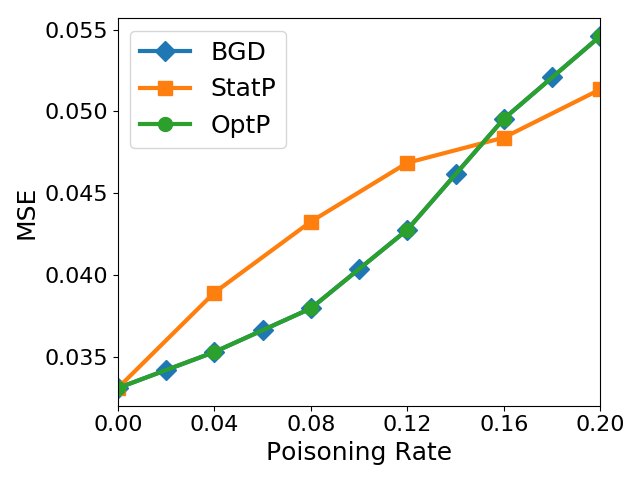}
\caption{\revision{Attack MSE on Contagio dataset for Lasso.}}
\label{fig:contagio-lasso}
\end{figure}

\vspace{0.02cm}
\noindent {\bf Attack effectiveness.}  \revision{We ran our \optp\ and \statp\ attacks on this dataset, in addition to \baseline. We expect \optp\ to be very similar in terms of poisoning to \baseline\ because it is run in the classification setting. Our proposed optimization framework is specific to regression. For instance, in classification settings optimizing by both $x$ and $y$ variables is exactly the same as optimizing only by $x$. For the initialization strategies, \invflip\ and \bflip\ are  exactly the same in the classification setting. In our framework we are exploiting the continuous response variables of regression for designing more effective optimization-based poisoning attacks. The only modification to \baseline\ might come from using \valobj\ as an optimization objective, but we expect that in isolation that will not produce significant changes. We showed in Section~\ref{sec:exp-new-attacks} that \valobj\ is most likely to be effective when optimization by $(x,y)$ is used. Our graphs from Figures~\ref{fig:contagio-ridge} and \ref{fig:contagio-lasso} confirm this expectation, and indeed \optp\ and \baseline\ are very similar in the attack MSEs. The effectiveness of the \baseline\ attack is similar to that reported by Xiao et al.~\cite{Xiao15} and we have confidence that our implementation and choice of hyper-parameters are accurate. Interestingly, the \statp\ attack outperforms \baseline\ and \optp\ by 40\% for ridge regression. We believe that pushing the feature values to the boundary as done by \statp\ has higher effect as a poisoning strategy for ridge regression in which the loss function is convex and the optimization maximum is achieved in the corners. That is not always the case with models such as Lasso, but still \statp\ is quite effective at poisoning Lasso as well.}

\end{document}